\documentclass[a4paper,12pt]{article}
\usepackage{graphicx}
\usepackage{xcolor} % for \textcolor markup of edits; loaded before hyperref
\usepackage[authoryear]{natbib}%%% \citet{}, \citep{}, ...
\usepackage{fullpage}
\graphicspath{{Figures/}} %%% Allows to use file names directly without specifying path in
\usepackage{booktabs,array,longtable}
\usepackage{amsmath}
\usepackage{amsfonts}
\usepackage{amssymb}
\usepackage{threeparttable} % for tablenotes environment
\usepackage{amsthm}     % for using \theoremstyle{plain}, {definition}, {remark}
\usepackage{algorithm} % for floating algorithm environment
\usepackage{tabularx}
\theoremstyle{plain}   % default. it sets the text in italic and adds extra space above and below the
\newtheorem{theorem}{Theorem}

\newtheorem{corollary}{Corollary}

\newtheorem{lemma}{Lemma}

\theoremstyle{definition} % adds extra space above and below, but sets the text in roman.
\theoremstyle{remark} % Font is set in roman, with no additional space above or below.
\newtheorem{remark}{Remark}

\newcommand{\Mode}{\textup{Mode\,}}
\renewenvironment{proof}[1][Proof]{\noindent\emph{#1.} }{\ $\Box$\medskip}
\usepackage{enumitem} % for control of vertical space in \itermize environment
\setlist[itemize]{itemsep=1pt, topsep=4pt}
\usepackage{titlesec} % for control of vertical space in \paragraph
\titlespacing*{\paragraph}
{0pt} % left indent
{1.5ex} % space before
{0.5em} % space after (horizontal, since \paragraph is run-in)
\usepackage[bookmarks=false, colorlinks=true, citecolor=blue, linkcolor=blue, urlcolor=blue]{hyperref}
\usepackage{cleveref}
\begin{document}

\title{Scheduling Jobs with Multiple Operational Modes \\ and Tail Times}

\author{Bo Chen\footnote{Warwick Business School, University of Warwick, Coventry, CV4 7AL, UK\@;
        b.chen@warwick.ac.uk }
        \and Jelmer Pier van der Gaast\footnote{School of Management, Fudan University, Shanghai, 200433, China; jpgaast@fudan.edu.cn }
        \and Xiandong Zhang\footnote{School of Management, Fudan University, Shanghai, 200433, China; xiandongzhang@fudan.edu.cn }}

\date{July 19, 2026}

% Jelmer:
% All my edits are in blue. Let me know if everything is okay, then I will remove them.
% Search of ``Jelmer" in text for some additional points that need attention.

\maketitle

\begin{abstract}
\noindent This study explores a scheduling challenge inspired by the production of programmable materials, such as advanced liquid crystal displays. In these systems, the final quality of a product is reached only after a resource-free maturation period, known as a ``tail", during which the machine is available for processing other jobs. Each job can be executed in one of several operational modes, with each mode determining a specific combination of machine processing time and subsequent tail duration. The primary task is to simultaneously choose the best mode for every job and determine their processing order. We analyze this model across several key performance goals, including the total time required to finish all jobs, the synchronization of completion times (the gap between the earliest and latest finished products), and the total weighted completion time. Our findings provide a detailed classification of the computational complexity of these problems. We demonstrate that while traditional versions with only one mode per job are simple to solve using standard rules, the introduction of just two modes makes finding optimal solutions for most of these goals computationally difficult. When the number of available modes is large, the complexity increases significantly. However, we also identify specific scenarios that remain efficiently solvable, such as when the processing order is already determined or when the goal is to minimize the average completion time. These results offer theoretical clarity and practical strategies for optimizing complex manufacturing and chemical processes involving forced cooling or maturation stages.

\smallskip\noindent
\textbf{Keywords:} scheduling, operational mode, tail time, NP-hardness, computational complexity
\end{abstract}

\section{Introduction}\label{sec:introduction}

Recent advances in programmable materials have enabled printing processes in which final product
quality, such as color in blue-phase liquid crystal (BPLC) displays, emerges not at the moment of
deposition, but during a subsequent, resource-free maturation phase. In the BPLC printing system
described by \cite{Yang2024}, a single machine deposits ink onto a polymer film. Each printed cell
occupies the machine for a certain processing time, after which the cell undergoes a
diffusion-driven post-processing stage, referred to as the \emph{tail}, during which the machine is
free to process other cells. The tail duration determines the final color and depends on the chosen
printing resolution, which involves parameters such as the number of layers or the droplet spacing.
The practical challenge is twofold: first, for each cell, one must select among several available
operational modes, each defining a processing-time and tail-duration pair; second, one must
sequence all cells on the single machine so that the moments at which the cells reach their target
colors are as synchronized as possible. {Synchronization matters because diffusion does not stop
once a cell attains its target color: the printed film is stabilized in a single post-printing
step, so cells whose maturation completes much earlier or later than that of the others are fixed
past or short of their intended colors, degrading the uniformity of the display.} This
synchronization requirement naturally leads to the objective of minimizing the spread between the
earliest and latest completion times, which is one of several scheduling criteria studied in this
paper.

The same scheduling structure also arises in prefabricated concrete production \citep{MA2026100172}. In a prefabrication plant, each concrete component must first undergo casting operations (such as mixing, pouring, and vibration) on a single machine. Once casting is completed, the component enters a curing stage during which it gains sufficient strength for demoulding. Because curing does not require the casting machine, production of subsequent components can begin immediately. The curing time, however, depends strongly on the chosen concrete mix. High-early-strength formulations, for example those with additional cement or chemical accelerators, can substantially shorten curing but typically require more intensive processing during casting. By contrast, slower-setting mixes reduce machine processing requirements while extending the off-machine curing period. Each mix therefore corresponds to an operational mode characterized by a specific processing time and tail time. The planner must determine both the mix design for each component and the processing sequence on the machine, with objectives such as meeting delivery schedules, reducing overall production time, or coordinating the completion of multiple components. Despite the different application context, the resulting optimization problem is mathematically equivalent to the BPLC printing setting described above.

These practical settings give rise to a fundamental scheduling model. In this model, each job, or
cell, can be processed in one of several modes; each mode specifies a processing time on the
machine and a tail time that does not occupy the machine. The decision variables thus consist of
both a mode assignment and a job permutation. {The objective is to minimize the maximum job
completion time (i.e., the makespan), the difference between the maximum and minimum job completion
times, or the total weighted job completion time, or alternatively to maximize the minimum job
completion time.} We give a precise problem definition in \Cref{sec:preliminaries}.

Although the special case with a single mode per job is well understood, the introduction of
multiple operational modes fundamentally changes the nature of the problem. When the job sequence
is fixed, mode selection alone can be solved in polynomial time for several objectives. {However,
when the mode assignment and the job sequence must be determined jointly, the scheduling problems
become significantly harder.}

The main contributions of this study are as follows. We formally define the problem model and establish basic optimality properties for the single-mode case in \Cref{sec:preliminaries}. We then provide a complete complexity classification for all basic variants of the model. \Cref{tbl:main-results,tbl:complexity} at the end of \Cref{sec:preliminaries} summarize which variants are polynomially solvable, which are weakly NP-hard, and which are strongly NP-hard. % Jelmer: I do not really like stating our contribution as a forward reference. However otherwise we need to explain quite some notation already.

The remainder of this paper is organized as follows. \Cref{sec:literature-review} reviews the related literature on scheduling with tails and with multiple modes. \Cref{sec:preliminaries} presents the full problem description, introduces the relevant notation, states fundamental lemmas, and provides summary tables for the main results and the complexity landscape. \Cref{sec:weakly-np-hard,sec:strongly-np-hard} establish the weak and strong NP-hardness results, respectively. \Cref{sec:conclusions} concludes with a discussion of practical implications and directions for future research.

\section{Literature review}\label{sec:literature-review}

Throughout this review, we use the standard three-field notation $\alpha | \beta | \gamma$ of
\cite{graham1979} to describe scheduling problems, where $\alpha$ specifies the machine
environment, $\beta$ specifies job characteristics and constraints (e.g., $q_j$ for delivery or tail times), and $\gamma$ specifies the objective function to be minimized or maximized (e.g., $C_{\max}$ for makespan, $\sum w_j C_j$ for total weighted completion time). We adopt this notation throughout the paper, including for the multi-mode model introduced in \Cref{sec:preliminaries}, where it is extended to accommodate mode-dependent processing and tail times.

This paper draws on two largely separate streams of scheduling research. The first concerns single-machine models in which a resource-free interval delays a job's completion after processing ends, which is variously termed delivery times, tails, or delivery lead times, a paradigm that captures settings where a finished item must still travel, cool, cure, or otherwise settle before it counts as complete. The second concerns models in which a job's processing time itself is a decision variable, chosen from a discrete or continuous set at some cost or under some constraint, rather than a fixed input. Our model sits at the intersection of these two traditions: each job's processing time and its post-processing tail are jointly determined by a discrete mode choice, so that selecting a mode simultaneously fixes how long the job occupies the machine and how long it continues to ``finish'' afterward. \Cref{sec:delivery-time} reviews the literature on scheduling with delivery times, and \Cref{sec:controllable-time} reviews work on controllable processing times, before \Cref{sec:connection} positions our contribution relative to both.

\subsection{Scheduling with delivery times}\label{sec:delivery-time}

Scheduling with delivery times, commonly referred to in the literature as scheduling with tails, is a classical paradigm in deterministic scheduling theory. In this model, each job $j$ is characterized by a processing time $p_j$ on the machine and a delivery time $q_j$ (or tail) that commences upon completion of processing, consumes no machine resource, and may proceed in parallel with subsequent jobs. The completion time of job $j$ is defined as $C_j = S_j + p_j + q_j$, where $S_j$ denotes its start time.

\paragraph{Foundational results.} To the best of our knowledge,
\cite{potts1980analysis} was the first to explicitly address a model integrating both production
scheduling and subsequent delivery operations on a single machine. In his formulation, each job $j$ is specified by a release date $r_j$, a processing time $p_j$, and a delivery time $q_j$. A central observation in \cite{potts1980analysis} is that the problem admits a symmetric transformation to the classical maximum-lateness formulation: by setting due dates $d_j = K - q_j$ for a sufficiently large constant $K$, minimizing the makespan becomes equivalent to minimizing the maximum lateness $L_{\max}$ with respect to these due dates. The computational complexity of the single-machine problem was established by \cite{lenstra1977complexity}, who proved that $1|r_j|L_{\max}$ (equivalently, $1|r_j,q_j|C_{\max}$) is NP-hard. When all release dates are zero, the problem reduces to $1|q_j|C_{\max}$, which is solvable in $O(n \log n)$ time by sequencing jobs in non-increasing order of $q_j$. This result is commonly referred to as Jackson's rule~\citep{jacksons1955}.

\paragraph{Approximation algorithms and complexity refinements.}
\cite{hall1992jacksons} developed a polynomial-time approximation scheme (PTAS) for
$1|r_j,q_j|C_{\max}$ and also provided a $(4/3)$-approximation algorithm for the variant with
precedence constraints\@. \cite{NOWICKI199469} proposed a $(3/2)$-approximation algorithm for the
same problem that runs in $O(n \log n)$ time. For the parallel-machine setting,
\cite{woeginger1994heuristics} analyzed $P|q_j|C_{\max}$ and established a $(2 -
{2}/({m+1}))$-approximation for the offline case, where $m$ is the number of machines, and a
$2$-approximation for the online case\@. \cite{hall1992approximation} further provided a PTAS for
$P|r_j,q_j|C_{\max}$\@. \cite{Hoogeveen1990} studied the problem $1||f(E_{\max},L_{\max})$, {where
$f(\cdot, \cdot)$ is a function of the maximum earliness $E_{\max}$ and the maximum lateness
$L_{\max}$ that is non-decreasing in both arguments}. He presented $O(n^2 \log n)$ algorithms for
the no-idle variant and proved strong NP-hardness otherwise. This result implies that
$1|q_j|D_{\max}$ (where $D_{\max}=C_{\max}-C_{\min}$) is polynomially solvable when idle time is
prohibited but becomes strongly NP-hard when idle time is permitted.

\paragraph{Integrated production and outbound distribution.}
\cite{Chen2010,chen2025} comprehensively surveyed studies under the broader context of integrated production and outbound distribution, which aim to jointly optimize production scheduling and transportation operations, offline or online. Within this framework, \cite{ChengKahlbacher1993} examined batch delivery models and showed that the problem of simultaneously determining the optimal number of batches and the job sequence to minimize the sum of delivery and waiting costs is NP-hard\@. \cite{Cheng1996} further established connections between batch delivery problems and parallel machine scheduling, and developed polynomial-time algorithms for several special cases\@. \cite{Lee20013} considered a single machine with a finite fleet of capacity-constrained trucks and provided a complexity classification for a range of scheduling problems with transportation considerations.

\paragraph{Multiple shipping modes and sequence-dependent tails.}
\cite{Stecke2007} investigated a production-transportation problem with multiple shipping modes (e.g., overnight, one-day, two-day delivery), where shipping cost varies with transportation lead time, and proved that minimizing total shipping cost is NP-hard\@. \cite{JIA2026410} studied a related model with multiple shipping modes, each characterized by a guaranteed shipping time and a convexly non-increasing shipping cost function; they demonstrated that the feasibility problem is NP-hard for a fixed planning horizon and strongly NP-hard for an arbitrary horizon. A distinct but conceptually related stream of research concerns past-sequence-dependent delivery times (psddt), in which the delivery time of a job depends on its waiting time, i.e., the cumulative processing time of jobs already scheduled\@. \cite{KOULAMAS2010264} proved that the single-machine problem $1|\textup{psddt}|\gamma$ is polynomially solvable for $\gamma \in \{C_{\max}, \sum C_j, L_{\max}, T_{\max}, \sum U_j\}$\@. \cite{wang2021research} further established that the total (discounted) weighted completion time minimization problem in the psddt model is solvable in $O(n \log n)$ time. % Jelmer: Koulamas does not use psddt in their paper but $q_{psd}$

\subsection{Scheduling with controllable processing times}\label{sec:controllable-time}

In classical deterministic scheduling theory, job processing times are typically regarded as fixed, exogenous parameters. However, in numerous practical manufacturing and service systems, processing times can be actively controlled through the allocation of additional resources (such as manpower, overtime, energy, fuel, catalysts, or subcontracting) to job operations \citep{SHABTAY20071643}. In such systems, the scheduler must combine job sequencing and resource allocation decisions to achieve efficient system performance.

The study of scheduling with controllable processing times was initiated by \citet{Vickson01091980, vickson1980} and has since been substantially developed and expanded upon. Early surveys of results up to 1990 were provided by \cite{NOWICKI1990271}, with subsequent surveys by \cite{SHABTAY20071643} and \cite{HOOGEVEEN2005592} in the context of multi-criteria scheduling.

The fundamental insight is that processing time reduction with additional resources typically incurs a cost, creating a trade-off between scheduling performance $F_1$ (e.g., makespan, total completion time, lateness) and resource consumption cost $F_2$. This trade-off gives rise to four prototypical problem formulations:

\begin{itemize}
    \item[$P1$:] Minimize the total integrated cost $f(F_1, F_2)$;
    \item[$P2$:] Minimize $F_1$ subject to $F_2 \le \lambda_1$;
    \item[$P3$:] Minimize $F_2$ subject to $F_1 \le \lambda_2$;
    \item[$P4$:] Identify the set of Pareto-optimal schedules for $(F_1, F_2)$, where $f$ is a linear or convex function and $\lambda_1, \lambda_2$ are fixed values.
\end{itemize}

The prevailing assumption in these studies is that processing times can be continuously varied
within a specified interval by allocating a divisible resource (e.g., electricity, gas)\@.
\cite{CHEN199769} introduced an alternative model in which processing times are discretely
controllable. {In their work, they examined} seven single-machine scheduling problems under this
discrete model, each characterized by an objective function of the form ``scheduling criterion $ +
$ TPC (total processing cost)''. Specifically, for the problem $1|\textup{dm}|\sum C_j +
\text{TPC}$, where ``dm'' stands for discretely controllable processing times, they demonstrated
that it can be reduced to an $n \times n$ assignment problem, which can be solved in $O(n^3)$ time.

More recently, \cite{WANG2026105204} studied the scheduling problem of periodic jobs with discretely controllable processing times on two parallel machines to maximize the overall accumulated utility and developed a pseudo-polynomial time algorithm.

\subsection{Connection to the present work}\label{sec:connection}

The two streams of literature reviewed above address complementary but distinct sources of scheduling flexibility. Work on delivery times treats the tail as an exogenous, fixed attribute of a job, decoupled from any resource decision. Work on controllable processing times, conversely, treats the processing time itself as a decision variable, but almost universally ties that decision to a cost to be traded off against scheduling performance, with no corresponding tail or delivery stage after processing ends. Neither stream captures a setting in which a single discrete choice simultaneously determines both how long a job occupies the machine and how long it continues to mature, cool, or otherwise finish afterward.

The present paper departs from both paradigms by introducing a multi-mode setting in which each job can be processed in one of several alternative modes, with each mode specifying a distinct \emph{pair} of processing time and tail time. The scheduler must simultaneously determine the mode
assignment and the job sequence. This differs from the delivery-time literature in that the tail is no longer fixed but is instead coupled to the processing time through mode selection. On the other hand, it differs from the controllable-processing-time literature in that the ``control'' exercised over the processing time is not purchased at a separate monetary cost but instead manifests directly as a change in the job's tail duration, and hence in its completion time. It also differs from past-sequence-dependent delivery time (psddt) models, where the tail depends on the job's position in the sequence: in our model the tail is mode-dependent rather than position-dependent, and is inherently linked to the processing time rather than to the schedule built around it.

This coupling fundamentally alters the complexity of the resulting problems. As we establish in
\Cref{sec:weakly-np-hard,sec:strongly-np-hard}, objectives that are polynomially solvable in the
classical single-mode setting, where mode selection reduces to a fixed pair $(p_j, q_j )$ per job
$j$, as in \Cref{lem:d1Cmax,lem:d1wjcj}, become NP-hard, and in several cases strongly NP-hard,
once even two operational modes per job are introduced. {This stands in contrast to the
controllable-processing-time literature, where discretely controllable models in which choosing a
mode fixes the processing time and incurs a separate control cost remain polynomially solvable for
the total completion time objective \citep{CHEN199769}, and only the weighted version becomes hard
once a job has two or more modes \citep{cao2006}.} Our results show that when a tail time is
coupled to the mode choice alongside the processing time, hardness can arise even for objectives,
and even at mode counts, where the pure controllable-processing-time analogue remains tractable.
This underscores that it is specifically the joint determination of machine time and
post-processing tail, rather than either flexibility in isolation, that drives the added
computational difficulty.

\section{Problem description and preliminaries}\label{sec:preliminaries}

This section formally develops the scheduling model studied throughout the paper. We begin in \Cref{sec:problem-definition} by defining the problem instance, the decision variables, and the
objective functions of interest, accompanied by \Cref{tbl:main-results,tbl:complexity}, which summarize our main results and the computational complexity landscape of all problem variants considered in this paper. \Cref{sec:single-mode} then presents two preliminary lemmas characterizing the optimal solution to the classical single-mode case, and \Cref{sec:fixed-sequence} establishes polynomial solvability results when the job sequence is fixed. \Cref{sec:single-mode,sec:fixed-sequence} serve both as a baseline against which the general multi-mode results can be compared and as building blocks for several of the hardness proofs in \Cref{sec:weakly-np-hard,sec:strongly-np-hard}.

\subsection{Problem definition}\label{sec:problem-definition}

An instance of our scheduling model is specified by
\begin{equation}\label{eqn:general-instance}
    \mathcal{I}=\left\{(p_{ij},q_{ij})\in
    \mathbb{Z}_{\ge 0}^2: i=1, \ldots, \delta_j; j=1, \ldots, n \right\},
\end{equation}
{where $\delta_j \ge 1$ is the number of operational modes of job $j$ and $\delta = \max_{1\le j
\le n} \delta_j$. For the total weighted completion time objective introduced below, each job $j$
additionally carries a positive integer weight $w_j$.} Such an instance calls for decisions $(\pi,
\mu)$, where $\pi$ is an $n$-permutation over $[n]\equiv\{1, \ldots, n\}$ and $\mu: [n]\rightarrow
[\delta]$ with $\mu(j)\le\delta_j$. Define
\begin{align}
     & C_k(\pi, \mu) = \sum_{j=1}^{k}p_{\mu(\pi(j)),\pi(j)} +q_{\mu(\pi(k)),\pi(k)}, \ k=1, \ldots, n; \label{eqn:Ck} \\ & C_{\min}(\pi,\mu) =\min_{1\le k\le n} C_k(\pi, \mu); \label{eqn:Cmin} \\ & C_{\max}(\pi,\mu) =\max_{1\le k\le n} C_k(\pi, \mu); \label{eqn:Cmax} \\
    \intertext{and let}
     & D_{\max}(\pi, \mu) =C_{\max}(\pi,\mu)-C_{\min}(\pi,\mu). \label{eqn:Dmax}
\end{align}

In terms of job sequencing terminology, we have $n$ jobs to be processed by a single machine. Each
job $j$ has $\delta_j$ operational modes and consists of two specific operations under each mode,
with the first operation ({the \textit{processing} operation) exclusively occupying the machine. We
refer to the second operation as the \textit{tail} operation}, representing, e.g., cooling, drying,
or waiting for a chemical reaction in practice. The tail operations do not occupy the machine. The
completion time of a job is defined as the sum of its start time, its processing time, and its tail
time. We need to choose a job processing sequence and an operational mode for each job, so that the
length of a certain time window is minimized. {Note that the definition of $C_k(\pi,\mu)$ implies
that the machine starts at time zero and processes the jobs consecutively without intermediate idle
time. This no-idle assumption is immaterial for the regular objectives $C_{\max}$ and $\sum w_j
C_j$ to be minimized, upon which inserted idle time can never improve, but it is essential for
maximizing $C_{\min}$ and for minimizing $D_{\max}$: without it, $C_{\min}$ could be increased
without bound by delaying all jobs, and the complexity of minimizing $D_{\max}$ changes
\citep{Hoogeveen1990}. The variant in which deliberate idle time is permitted is indicated by
``\emph{idle}'' in \Cref{tbl:main-results,tbl:complexity}.}

Following the three-field notation introduced in \Cref{sec:literature-review}, we denote our
problem as $1|\Mode (p_{ij} , q_{ij})|\gamma$, where the machine environment 1 indicates a single
machine, Mode $(p_{ij} , q_{ij} )$ indicates that job $j$ may be processed under any of its
$\delta_j$ modes, each specifying a processing time $p_{ij}$ and a tail time $q_{ij}$, and $\gamma$
is an objective function to be maximized or minimized. We summarize our main results and the
complexity landscape of our model in \Cref{tbl:main-results} and \Cref{tbl:complexity},
respectively, {where the objective $C_{\min}$ is to be maximized and all other objectives are to be
minimized.}

\begin{table}[ht]
    \centering
    \vspace{-0.5cm}
    \caption{Summary of Main Results}\label{tbl:main-results}
    \medskip
    \begin{tabularx}{\textwidth}{llX}
        \toprule
        Problem Descriptor                                          & Complexity           & Reference                                          \\
        \midrule
        $1 | r_j, q_j | C_{\max} $                                  & NP-hard              & \cite{lenstra1977complexity}                       \\
        \addlinespace
        $1 | q_j | D_{\max} $                                       & Poly solvable        & \cite{Hoogeveen1990}                               \\
        \addlinespace
        $1 | q_j, \textup{idle} | D_{\max} $                        & Strongly NP-hard     & \cite{Hoogeveen1990}
        \\
        \addlinespace
        $1 | q_j | C_{\max} $                                       & Poly solvable        & \cite{woeginger1994heuristics}
        \\
        \addlinespace
        $1 | q_j | \sum w_j C_j$                                    & Poly solvable        & \Cref{lem:d1wjcj}                                  \\
        \addlinespace
        $1 | q_j | C_{\min}$                                        & Poly solvable        & \Cref{lem:d1Cmin}                                  \\
        \addlinespace
        $1 | \Mode(p_{ij},q_{ij}) | \sum C_j$                       & Poly solvable        & \cite{CHEN199769}                                  \\
        \addlinespace
        $1 | \Mode(p_{ij},q_{ij}),\ \bar{\pi} | C_{\max}$           & Poly solvable        & \Cref{lem:fix-pi-Cmax}                             \\
        \addlinespace
        $1 | \Mode(p_{ij},q_{ij}),\ \bar{\pi} | C_{\min}$           & Poly solvable        & \Cref{lem:fix-pi-maxCmin}                          \\
        \addlinespace
        $1 | \Mode(p_{ij},q_{ij}),\ \bar{\pi} | \sum w_j C_j$       & Poly solvable        & \Cref{lem:fix-pi-sum-wjcj}                         \\
        \addlinespace
        $1 | \Mode(p_{ij}, q_{ij}), \bar{\pi}, \delta=2 | D_{\max}$ \qquad \qquad & Weakly NP-hard       & \Cref{them:fix-pi-f,thm:fix-pi-f-var-d}            \\
        \addlinespace
        $1 | \Mode(p_{ij}, q_{ij}), \bar{\pi} | D_{\max}$           & Weakly NP-hard and \qquad \qquad   & \Cref{thm:fix-pi-f-var-d}            \\
                                                                    & pseudo-poly solvable &                                                    \\
        \addlinespace
        $1 | \Mode(p_{ij}, q_{ij}), \delta=2 | C_{\max}$            & NP-hard              & \Cref{thm:d2Cmax}                                  \\
        \addlinespace
        $1 | \Mode(p_{ij}, q_{ij}), \delta=2 | D_{\max}$            & NP-hard              & \Cref{cor:d2f}                                     \\
        \addlinespace
        $1 | \Mode(p_{ij}, q_{ij}), \delta=2 | \sum w_j C_j$        & NP-hard              & \cite{cao2006}                                     \\
        \addlinespace
        $1 | \Mode(p_{ij}, q_{ij}), \delta=2 | C_{\min}$            & NP-hard              & \Cref{thm:d2Cmin}                                  \\
        \addlinespace
        $1 | \Mode(p_{ij}, q_{ij}) | C_{\max}$                      & Strongly NP-hard     & \Cref{thm:Cmax-strong}                             \\
        \addlinespace
        $1 | \Mode(p_{ij}, q_{ij}) | D_{\max}$                      & Strongly NP-hard     & \Cref{cor:f-strong}                                \\
        \addlinespace
        $1 | \Mode(p_{ij}, q_{ij}) | \sum w_j C_j$                  & Strongly NP-hard     & \Cref{thm:wjcj-strong}                             \\
        \addlinespace
        $1 | \Mode(p_{ij}, q_{ij}) | C_{\min}$                      & Strongly NP-hard     & \Cref{thm:Cmin-strong}                             \\
        \bottomrule
    \end{tabularx}
\end{table}

\begin{table}[ht]
    \centering
    \begin{threeparttable}
        \vspace{-0.5cm}
        \caption{Summary of Complexity}\label{tbl:complexity}
        \medskip
        \begin{tabular}{cccccc}
            \toprule
            $\delta$     & $\sum C_j$                & $C_{\max}/\bar{\pi}$                   & $C_{\min}/\bar{\pi}$          & $\sum w_j C_j/\bar{\pi}$                & $D_{\max}/\bar{\pi}$/$idle$                                         \\
            \midrule
            $1$          & $\downarrow$              & $\mathcal{P}$/$\downarrow$ & $\mathcal{P}$/$\downarrow$    & $\mathcal{P}$/$\downarrow$             & $\mathcal{P}$/$\mathcal{P}$/$\mathcal{H}^+$ \\
            \addlinespace
            $2$          & $\downarrow$              & $\mathcal{H}$/$\downarrow$             & $\mathcal{H}$/$\downarrow$    & $\mathcal{H}$/$\downarrow$ & $\mathcal{H}$/$\mathcal{H^-}$/$\uparrow$                            \\
            \addlinespace
            unrestricted & $\mathcal{P}$ & $\mathcal{H}^+$/$\mathcal{P}$          & $\mathcal{H}^+$/$\mathcal{P}$ & $\mathcal{H}^+$/$\mathcal{P}$          & $\mathcal{H}^+$/$\mathcal{H^-}$/$\uparrow$                          \\
            \bottomrule
        \end{tabular}

        \begin{tablenotes}
            \small
            \item[a] $\mathcal{P}$ denotes polynomial-time solvability.
            \item[b] $\mathcal{H^-}$ denotes NP-hardness in the ordinary sense.
            \item[c] $\mathcal{H}$ denotes at least NP-hardness in the ordinary sense.
            \item[d] $\mathcal{H}^+$ denotes NP-hardness in the strong sense.
            \item[e] $\bar{\pi}$ denotes the same problem with fixed job sequences.
            \item[f] The row for constant $\delta$ is omitted due to apparent implications.
        \end{tablenotes}
    \end{threeparttable}
\end{table}
% Jelmer: The meaning of the blue entries is never explained. If blue is meant to mark results taken from, or implied by, the literature rather than proved in this paper.

\subsection{Preliminary lemmas for single-mode case}\label{sec:single-mode}

In this subsection, we focus on the special case of $\delta =1$. In other words, each job has only one operational mode. Therefore, specification for job $j$ in \eqref{eqn:general-instance} is simplified as $(p_j, q_j)$.

\begin{lemma}[\cite{woeginger1994heuristics}]\label{lem:d1Cmax}
    The problem $1 | q_{j} | C_{\max}$ can be solved to optimality in $O(n\log n)$ time by sequencing all jobs in non-increasing order of $q_{j}$. $\Box$
\end{lemma}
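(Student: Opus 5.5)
The plan is a standard adjacent-interchange argument (Jackson's rule), followed by a remark on running time. By the no-idle definition \eqref{eqn:Ck} with $\delta=1$, a permutation $\pi$ gives completion times $C_k(\pi)=\sum_{j\le k}p_{\pi(j)}+q_{\pi(k)}$, and $C_{\max}(\pi)=\max_k C_k(\pi)$. Inserting idle time can only increase every $C_k$, so for $C_{\max}$ it suffices to optimize over permutations with the machine running continuously from time zero.

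The key step is the exchange itself. Take any optimal permutation $\pi$. Suppose it has an adjacent pair in positions $k$ and $k+1$ with $q_{\pi(k)}<q_{\pi(k+1)}$, and write $a=\pi(k)$, $b=\pi(k+1)$. Let $T$ be the total processing time of the jobs before position $k$. Before the swap, the two jobs complete at $T+p_a+q_a$ and $T+p_a+p_b+q_b$, so the larger is $T+p_a+p_b+q_b$ because $q_b>q_a$.

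After swapping to obtain $\pi'$, the two jobs complete at $T+p_b+q_b$ and $T+p_a+p_b+q_a$. Both values are at most $T+p_a+p_b+q_b$, using $p_a\ge 0$ for the first and $q_a<q_b$ for the second. All other jobs keep the same start times, so their completion times are unchanged. Hence $C_{\max}(\pi')\le C_{\max}(\pi)$.

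Each such swap strictly reduces the number of inversions with respect to non-increasing $q$. Repeating the swap therefore transforms $\pi$ into a sequence ordered by non-increasing $q_j$ without increasing $C_{\max}$. Ties do not matter, because swapping adjacent jobs with equal $q$ leaves the maximum of the pair unchanged by the same computation. So every non-increasing-$q$ order is optimal.

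For complexity, sorting takes $O(n\log n)$ time and evaluating $C_{\max}$ takes $O(n)$. There is no real obstacle here. The only point needing care is confirming that the swap does not affect other jobs, which follows because the prefix sums outside positions $k,k+1$ are unchanged.
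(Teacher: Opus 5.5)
Your adjacent-interchange argument is correct and complete, including the tie-handling step, which shows that every non-increasing-$q_j$ order is optimal and not just the one your swaps produce. The paper states this lemma without proof and attributes it to the literature (Jackson's rule). It later invokes the lemma as ``the standard interchange argument,'' which is exactly the argument you give.
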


\begin{lemma}\label{lem:d1wjcj}
    The problem $1 | q_j | \sum w_j C_j$ can be solved to optimality in $O(n\log n)$ time by sequencing the jobs in non-increasing order of the ratio $w_j/p_j$.
\end{lemma}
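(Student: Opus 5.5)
The plan is to reduce the objective to the classical problem $1||\sum w_j C_j$ and then invoke Smith's ratio rule through the standard adjacent-interchange argument. With the no-idle convention of \eqref{eqn:Ck}, the completion time of the job in position $k$ is $C_k(\pi)=\sum_{j=1}^{k}p_{\pi(j)}+q_{\pi(k)}$.

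\textbf{Step 1: split off the tails.} Multiplying by the weights and summing over positions gives
\begin{equation*}
\sum_{k=1}^n w_{\pi(k)}C_k(\pi)=\sum_{k=1}^n w_{\pi(k)}\sum_{j=1}^{k}p_{\pi(j)}+\sum_{j=1}^n w_jq_j .
\end{equation*}
The second term is independent of $\pi$, since every job contributes $w_jq_j$ whatever its position. Minimizing $\sum w_jC_j$ is therefore equivalent to minimizing $\sum w_j\hat C_j$, where $\hat C_j$ is the ordinary completion time of the instance with processing times $p_j$ and no tails.

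\textbf{Step 2: optimize the reduced objective by an interchange argument.} Take any sequence in which jobs $a$ and $b$ are adjacent, with $a$ immediately before $b$. Swapping them leaves every other job's $\hat C$ unchanged. The swap changes the objective by $w_a p_b - w_b p_a$. Hence, if $w_a/p_a < w_b/p_b$, the swap does not increase the objective.

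Starting from an optimal sequence, repeatedly removing such inversions therefore reaches the ratio-sorted sequence without increasing the cost. Ties can be ordered arbitrarily, because swapping tied jobs changes the objective by zero. So the non-increasing $w_j/p_j$ order is optimal.

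\textbf{Step 3: handle zero processing times and the running time.} Since the data lie in $\mathbb{Z}_{\ge0}$, some $p_j$ may be $0$. Such jobs should be read as having ratio $+\infty$ and placed first. This is consistent with the exchange inequality written in the cross-multiplied form $w_b p_a \ge w_a p_b$, so no division is needed. Computing the sequence takes one sort, which costs $O(n\log n)$ time.

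The only real obstacle is Step 1, namely noticing that the tails enter additively with fixed weights and so do not interact with the sequence. Once this is seen, the rest is the textbook proof of Smith's rule.
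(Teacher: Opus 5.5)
Your proof is correct and follows essentially the same route as the paper's sketch: Smith's ratio rule via an adjacent-interchange argument, with jobs of zero processing time handled separately. The differences are cosmetic. You state explicitly that the tail term $\sum_j w_j q_j$ does not depend on the sequence, which the paper uses only implicitly in its swap. You also absorb the $p_j=0$ jobs through the cross-multiplied comparison $w_b p_a \ge w_a p_b$, whereas the paper first moves them to the front.
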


\begin{proof}[Proof sketch]
    First, note that moving all jobs $j$ with $p_j=0$ to the beginning in arbitrary order from any schedule will not increase the objective function value. For the schedule of all remaining jobs with a positive processing time each, if job $j$ immediately precedes job $k$ with $w_j/p_j < w_k/p_k$, then swapping these two jobs in the schedule will decrease the objective function value.
\end{proof}

\begin{lemma}\label{lem:d1Cmin}
    The problem $1 | q_j | C_{\min}$ can be solved to optimality by sequencing all jobs in non-increasing order of $p_j+q_j$.
\end{lemma}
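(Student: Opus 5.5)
We use an adjacent-interchange argument: any sequence can be transformed into one ordered by non-increasing $p_j+q_j$ without decreasing $C_{\min}$. Write $S_k=\sum_{j\le k}p_{\pi(j)}$, so that $C_k=S_{k-1}+p_{\pi(k)}+q_{\pi(k)}$.

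Fix a sequence $\pi$ and suppose job $a=\pi(k)$ immediately precedes job $b=\pi(k+1)$ with $p_a+q_a<p_b+q_b$. Let $\pi'$ be the sequence obtained by swapping $a$ and $b$, and let $t=S_{k-1}$. The partial sum $S_{k+1}=t+p_a+p_b$ is the same in both sequences. Hence the completion times of all jobs other than $a$ and $b$ are unchanged, since each depends only on the partial sums and its own $(p,q)$.

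It remains to compare the two affected completion times.
\begin{itemize}
\item In $\pi$ they are $C_a=t+p_a+q_a$ and $C_b=t+p_a+p_b+q_b$.
\item In $\pi'$ they are $C'_b=t+p_b+q_b$ and $C'_a=t+p_b+p_a+q_a$.
\end{itemize}
Now $C'_b=t+p_b+q_b>t+p_a+q_a=C_a$ by assumption. Also $C'_a=t+p_a+p_b+q_a\ge t+p_a+q_a=C_a$, because $p_b\ge 0$. Therefore $\min\{C'_a,C'_b\}\ge C_a\ge\min\{C_a,C_b\}$. Since every other completion time is unchanged, $C_{\min}(\pi')\ge C_{\min}(\pi)$.

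To finish, start from an optimal sequence and repeatedly swap adjacent inversions, as in bubble sort. This takes finitely many steps, because each swap strictly reduces the number of inverted pairs. The result is sorted in non-increasing order of $p_j+q_j$ and is still optimal.

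Ties need no special care. Jobs with equal $p_j+q_j$ may appear in any relative order: the inequality used only $p_a+q_a\le p_b+q_b$ together with $p_b\ge 0$. So every non-increasing order of $p_j+q_j$ is optimal, and it can be computed by sorting in $O(n\log n)$ time.

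The only delicate step is confirming that the swap never lowers the minimum. This follows because the new job in position $k$ completes no earlier than the old one, and the job moved to position $k+1$ completes no earlier than its own former completion time.
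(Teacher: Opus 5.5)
Your proof is correct and uses the same adjacent-interchange argument as the paper. Swapping an adjacent pair with $p_a+q_a<p_b+q_b$ leaves every other completion time unchanged, and both swapped jobs then complete no earlier than $C_a$; repeated swaps therefore reach a non-increasing order without decreasing $C_{\min}$. Your inversion-count termination argument and your handling of ties add detail that the paper's sketch leaves implicit.
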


\begin{proof}[Proof sketch]
    The objective of this problem is to maximize $C_{\min}(\pi)$ defined in \eqref{eqn:Cmin} over all $n$-permutations $\pi$ ({the mode assignment $\mu$ is trivial since $\delta=1$}). {Recall that our model does not permit machine idle time; otherwise $C_{\min}$ could be increased without bound by delaying all jobs.} It is easy to verify that, if job $j$ immediately {precedes} job $k$ in $\pi$ with $p_j+q_j < p_k+q_k$, which implies that $C_j < C_k$, then after swapping these two jobs with all other jobs fixed, the completion time of either job will be at least $C_j$, while the completion time of any other job remains the same. Therefore, starting from any schedule $\pi$, repeated swapping of any two adjacent jobs if necessary will result in a schedule $\pi'$ with $C_{\min}(\pi')\ge C_{\min}(\pi)$ and $p_{\pi'(j)}+q_{\pi'(j)}$ is non-increasing in $j$.
\end{proof}

\subsection{Polynomial solvability under fixed job sequence}\label{sec:fixed-sequence}

This subsection examines the case in which the job permutation is imposed exogenously, for instance, by upstream process constraints in a BPLC printing line where cells must be deposited in a prescribed order. Hence, only the mode assignment remains to be decided for each job. This setting is of independent interest because in many maturation-based production environments the sequence in which items enter the machine is fixed by considerations outside the scheduler's control, while the choice of operational mode remains flexible. We show below that, once the sequence is fixed, which is indicated by adding $\bar{\pi}$ into the middle field in the scheduling problem descriptor, three of our four objectives admit efficient exact algorithms, each exploiting a different structural property of the corresponding objective. These results delineate a practically relevant tractable regime and, as we show in \Cref{sec:weakly-np-hard}, also serve as essential building blocks for several of the hardness reductions that follow.

{The three proofs below share the same notation. Since the sequence is fixed, we relabel the jobs
so that the imposed order is $\bar{\pi}=(1,\dots,n)$; a solution is then specified solely by a mode
assignment $\mu$ with $\mu(j)\in\{1,\dots,\delta_j\}$. For such an assignment we abbreviate
$p_j=p_{\mu(j),j}$ and $q_j=q_{\mu(j),j}$, write $S_j=\sum_{k=1}^{j} p_k$ for the cumulative
processing time through job $j$ (with $S_0=0$), and denote the completion time of job $j$ by
$C_j=S_{j-1}+p_j+q_j$.}

\begin{lemma}\label{lem:fix-pi-Cmax}
    The problem $1|\Mode(p_{ij},q_{ij}),\bar \pi|C_{\max}$ is polynomially solvable.
\end{lemma}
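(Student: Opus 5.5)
Since the sequence is fixed as $\bar\pi=(1,\dots,n)$, we only need to choose $\mu$ to minimize $C_{\max}=\max_j (S_{j-1}+p_j+q_j)$. We will use dynamic programming backward over the sequence. The key observation is that the contribution of jobs $j,\dots,n$ to the makespan depends on the earlier choices only through the additive offset $S_{j-1}$. For a suffix beginning at job $j$ started at time $0$, define
\[
F_j=\min_{\mu(j),\dots,\mu(n)}\ \max_{k\ge j}\Bigl(\sum_{l=j}^{k-1}p_l+p_k+q_k\Bigr),
\]
with $F_{n+1}=-\infty$ (or $0$, since all values are nonnegative). Then the makespan of the whole schedule equals $F_1$.

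The recursion is
\[
F_j=\min_{1\le i\le\delta_j}\max\bigl\{p_{ij}+q_{ij},\ p_{ij}+F_{j+1}\bigr\},
\]
because once job $j$ uses mode $i$, every later job is shifted by exactly $p_{ij}$. To justify it, note that $\max\{a,\,p+x\}$ is nondecreasing in $x$. Hence, for a fixed mode of job $j$, it is optimal to choose the suffix $j+1,\dots,n$ to minimize its own makespan $F_{j+1}$, independently of the mode of job $j$. This monotonicity and separability argument is the only step that needs care.

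Computing all $F_j$ for $j=n,\dots,1$ takes $O\bigl(\sum_j\delta_j\bigr)$ time, which is linear in the input size. An optimal $\mu$ is then recovered by storing the minimizing mode at each stage.

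I expect no real obstacle. The one point to state explicitly is why a backward recursion, and not a forward one, is used. Going forward, the state would be the prefix sum $S_{j-1}$ together with the current maximum, a trade-off that leads only to pseudo-polynomial algorithms. The backward formulation removes the dependence on $S_{j-1}$ entirely, because it enters as a uniform additive shift of all later completion times. The same approach should also work for the fixed-sequence $C_{\min}$ problem by replacing max with min and min with max.
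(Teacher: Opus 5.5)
Your backward recursion is correct, and it takes a genuinely different route from the paper. Choosing mode $i$ for job $j$ shifts every later completion time by exactly $p_{ij}$, and $x\mapsto\max\{p_{ij}+q_{ij},\,p_{ij}+x\}$ is nondecreasing. So the suffix can be optimized independently of earlier choices, and $F_1$ is the optimal makespan (with $F_{n+1}=0$, valid because $q_{in}\ge 0$).

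The paper instead binary-searches a threshold $T$ over $[\tau_{\min},\tau_{\max}]$. It tests feasibility by a forward greedy pass that, for each job, picks among the modes with $S_{j-1}+p_{ij}+q_{ij}\le T$ one of smallest processing time. Correctness of that test rests on an induction showing that the greedy prefix sums never exceed those of any feasible assignment.

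Your version gives more:
\begin{itemize}
\item it is strongly polynomial, running in $O(\sum_j\delta_j)$ rather than $O\bigl((\sum_j\delta_j)\log(np_{\max}+q_{\max})\bigr)$;
\item it needs no a priori bounds on the optimum;
\item its correctness is a one-line monotonicity argument;
\item swapping $\max$ and $\min$, with terminal value $+\infty$, also settles \Cref{lem:fix-pi-maxCmin}, as you say.
\end{itemize}
The paper's threshold formulation, in turn, expresses the problem as one-sided per-job bounds. That is what the remark after \Cref{them:fix-pi-f} uses to explain why the two-sided windows of $D_{\max}$ break tractability.

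One correction to your closing comment: a forward method need not be pseudo-polynomial. Once $T$ is fixed, the running maximum no longer needs tracking, and the exchange argument shows that only the smallest reachable prefix sum matters. That is exactly how the paper obtains a polynomial bound.
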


\begin{proof}
    {The makespan is $C_{\max}=\max_j C_j$, which we minimize by binary search over a polynomial-time feasibility test.} Let $p_{\max} = \max_{i,j} p_{ij}$ and $q_{\max} = \max_{i,j} q_{ij}$. For a given threshold $T$, we ask whether there exists a mode assignment {$\mu$} such that $C_j \le T$ for every job $j$.

    The decision test is greedy. Set $S_0 = 0$. For $j = 1, \dots, n$, define the set of eligible modes
    \begin{align*}
        \mathcal{F}_j(T) = \bigl\{ {i\in\{1,\dots,\delta_j\}} \mid S_{j-1} + p_{ij} + q_{ij} \le T \bigr\}.
    \end{align*}
    If {$\mathcal{F}_j(T) = \varnothing$}, declare infeasible. Otherwise, select a mode $i_j \in \mathcal{F}_j(T)$ with minimal processing time and {set $\mu(j) = i_j$, so that $p_j = p_{\mu(j),j}$; update $S_j = S_{j-1} + p_j$}. If all jobs are processed successfully, declare feasible.

    %We now establish correctness. If the greedy test returns feasible, the constructed assignment satisfies the bound by definition. Conversely, suppose a feasible assignment $i_1^*, \dots, i_n^*$ exists, and let $S_j^*$ be the cumulative processing times under this assignment. We prove by induction that the greedy cumulative times $S_j^g$ satisfy $S_j^g \le S_j^*$ for all $j$. The base case $S_0^g = S_0^* = 0$ is immediate. Assume $S_{j-1}^g \le S_{j-1}^*$. Since $i_j^*$ is feasible, $S_{j-1}^* + p_{i_j^* j} + q_{i_j^* j} \le T$. Using the induction hypothesis, $S_{j-1}^g + p_{i_j^* j} + q_{i_j^* j} \le T$, so $i_j^* \in \mathcal{F}_j(T)$. The greedy rule chooses a mode with the smallest processing time in $\mathcal{F}_j(T)$, hence $p_{i_j^g j} \le p_{i_j^* j}$. Consequently,
    %\(S_j^g = S_{j-1}^g + p_{i_j^g j} \le S_{j-1}^* + p_{i_j^* j} = S_j^*.\)
    %Thus the greedy test never fails whenever a feasible assignment exists, proving its correctness.

    For the given optimization problem, define $\tau_{\min} = \max_j \min_i (p_{ij} + q_{ij})$, and $\tau_{\max} = \sum_j \max_i p_{ij} + q_{\max}$. The optimal makespan $C_{\max}^*$ lies in $[\tau_{\min}, \tau_{\max}]$, and $\tau_{\max} - \tau_{\min} \le n p_{\max} + q_{\max}$. Hence, a binary search over this interval requires $O(\log(n p_{\max} + q_{\max}))$ iterations. Each feasibility test runs in ${O(\sum_j \delta_j)}$ time. The smallest $T$ for which the greedy test returns feasible is exactly $C_{\max}^*$. Therefore, the optimal makespan is found in ${O\bigl((\sum_j \delta_j) \log(n p_{\max} + q_{\max})\bigr)}$ time, which is polynomial in the input size.
\end{proof}

\begin{lemma}\label{lem:fix-pi-maxCmin}
    The problem $1 | \Mode(p_{ij}, q_{ij}), \bar{\pi} | C_{\min}$ is polynomially solvable.
\end{lemma}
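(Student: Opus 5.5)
We maximize $C_{\min}=\min_j C_j$ by binary search over a threshold $T$, mirroring the proof of \Cref{lem:fix-pi-Cmax}. For a given $T$ we ask whether some mode assignment $\mu$ gives $C_j\ge T$ for every job $j$, with $C_j=S_{j-1}+p_j+q_j$ as in the shared notation. In the $C_{\max}$ case the greedy test kept the cumulative processing time $S_j$ as small as possible. Here the constraints are lower bounds, and $S_{j-1}$ enters every later completion time with a positive sign, so a larger prefix sum only helps later jobs. The natural greedy rule is therefore the mirror image. Start from $S_0=0$, and for $j=1,\dots,n$ let
\[
\mathcal{G}_j(T)=\bigl\{i\in\{1,\dots,\delta_j\}\mid S_{j-1}+p_{ij}+q_{ij}\ge T\bigr\}.
\]
If $\mathcal{G}_j(T)=\varnothing$, declare infeasible. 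Otherwise choose a mode in $\mathcal{G}_j(T)$ with \emph{maximal} processing time, set $\mu(j)$ accordingly, and update $S_j=S_{j-1}+p_j$.

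The correctness argument is an exchange/induction argument of the same shape as in \Cref{lem:fix-pi-Cmax}, with inequalities reversed. Suppose a feasible assignment $\mu^*$ exists, with prefix sums $S_j^*$. We show by induction that the greedy never fails and that $S_j^g\ge S_j^*$ for all $j$. The base case $S_0^g=S_0^*=0$ is immediate. Assume $S_{j-1}^g\ge S_{j-1}^*$. The mode $\mu^*(j)$ satisfies $S_{j-1}^*+p_{\mu^*(j),j}+q_{\mu^*(j),j}\ge T$, so it also satisfies the constraint with $S_{j-1}^g$ in place of $S_{j-1}^*$. Hence $\mu^*(j)\in\mathcal{G}_j(T)$ and $\mathcal{G}_j(T)\neq\varnothing$. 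The greedy picks a mode with $p$ at least $p_{\mu^*(j),j}$, which gives $S_j^g\ge S_j^*$. Conversely, if the greedy succeeds, its assignment meets all bounds by construction. Monotonicity in $T$ is clear: a feasible assignment for $T$ is feasible for every smaller $T$. So the largest $T$ passing the test is exactly $C_{\min}^*$.

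It remains to bound the search range. Every $C_j\ge 0$, and every $C_j\le \sum_k \max_i p_{ik}+q_{\max}$, so $C^*_{\min}$ is an integer in $[0,\,np_{\max}+q_{\max}]$. Binary search then takes $O(\log(np_{\max}+q_{\max}))$ tests. Each test scans every mode once, in $O(\sum_j\delta_j)$ time, so the total running time is $O\bigl((\sum_j\delta_j)\log(np_{\max}+q_{\max})\bigr)$, which is polynomial.

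The only delicate step is justifying that maximizing the prefix sum is the right greedy choice. The point is that job $j$'s own constraint involves only $S_{j-1}$, $p_j$ and $q_j$, while its choice affects later jobs only through $S_j$, and that effect is monotone in the favourable direction. Once this separation is stated explicitly, the induction above goes through without obstacle. We also note that the no-idle assumption is essential here, as remarked in the problem definition.
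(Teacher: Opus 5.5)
Your proposal is correct and follows essentially the same route as the paper: binary search on the threshold over $[0,\sum_j\max_i p_{ij}+\max_{i,j}q_{ij}]$, combined with the greedy test that picks, among eligible modes, one of maximal processing time. Your induction showing $S_j^g\ge S_j^*$ is exactly the correctness argument the paper intends by analogy with \Cref{lem:fix-pi-Cmax}. You also spell out the monotonicity in $T$ that justifies the binary search, which the paper leaves implicit.
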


\begin{proof}
    {The objective is $C_{\min}=\min_j C_j$.} We first solve the decision problem: given $L$, does there exist a mode assignment {$\mu$} with $C_{\min}\ge L$? The greedy test initializes $S_0=0$ and, for ${j}=1,\dots,n$, defines
    {\begin{align*}
        \mathcal{F}_j(L)=\{i\in\{1,\dots,\delta_j\}\mid S_{j-1}+p_{ij}+q_{ij}\ge L\}.
    \end{align*}}
    {If $\mathcal{F}_j(L)=\varnothing$, stop and reject; otherwise choose $i_j\in\mathcal{F}_j(L)$ of maximal processing time, set $\mu(j)=i_j$ so that $p_j=p_{\mu(j),j}$, and update $S_j=S_{j-1}+p_j$. Accept if all jobs are processed.}

    %Correctness follows by an induction analogous to that of Lemma~\ref{lem:fix-pi-Cmax} (with inequalities reversed): if a feasible assignment $i^*$ exists with cumulative times $S_k^*$, then the greedy times $S_k^g$ satisfy $S_k^g\ge S_k^*$ for all $k$. Hence $i_k^*\in\mathcal{F}_k$ whenever $S_{k-1}^g\ge S_{k-1}^*$; the greedy choice gives $p_{i_k^g k}\ge p_{i_k^* k}$, so $S_k^g\ge S_k^*$. Thus the test accepts exactly when a feasible assignment exists.

    For optimization, note that $C_{\min}^*\in[0,U]$ with $U=\sum_j\max_i p_{ij}+\max_{i,j}q_{ij}$. Binary search over this interval requires $O(\log U)$ iterations; each test runs in $O(\sum_j\delta_j)$. {Since $\log U$ is bounded by a polynomial in the input size, the total time $O((\sum_j\delta_j)\log U)$ is polynomial.}
\end{proof}

\begin{lemma}\label{lem:fix-pi-sum-wjcj}
    The problem $1 |\Mode(p_{ij}, q_{ij}), \bar{\pi} | \sum w_j C_j$ is polynomially solvable.
\end{lemma}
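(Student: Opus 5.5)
With the sequence fixed as $\bar\pi=(1,\dots,n)$, the objective separates cleanly. We have
\[
\sum_{j=1}^n w_jC_j=\sum_{j=1}^n w_j\bigl(S_{j-1}+p_j+q_j\bigr)=\sum_{j=1}^n \Bigl(W_j\,p_{\mu(j),j}+w_j\,q_{\mu(j),j}\Bigr),
\qquad W_j=\sum_{k=j}^n w_k .
\]
This holds because the processing time $p_k$ of job $k$ enters the completion time of every job $j\ge k$, while the tail $q_k$ enters only $C_k$.

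The key step is to verify this exchange of summation carefully. Write $S_{j-1}=\sum_{k<j}p_k$, so that
\[
\sum_j w_jS_{j-1}=\sum_k p_k\sum_{j>k}w_j .
\]
Adding the terms $w_jp_j$ turns the inner weight into $W_k$. Since there is no idle time, the completion times are fully determined by $\mu$.

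Once the objective is written in this form, it is a sum of $n$ terms, and term $j$ depends only on $\mu(j)$. No constraints link different jobs, so the minimization decomposes. For each job $j$ independently, choose
\[
\mu(j)\in\arg\min_{1\le i\le\delta_j}\bigl(W_j\,p_{ij}+w_j\,q_{ij}\bigr).
\]
Any assignment differing from this one can be improved job by job without affecting the other terms, which gives optimality.

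For the running time, the suffix sums $W_j$ cost $O(n)$ and the mode selection costs $O(\sum_j\delta_j)$, so the total is $O(\sum_j\delta_j)$. This is strongly polynomial, so no binary search is needed, in contrast to \Cref{lem:fix-pi-Cmax,lem:fix-pi-maxCmin}.

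There is no real obstacle here. The only point needing care is the index bookkeeping in the rewriting, in particular that job $j$'s own processing time is counted with weight $w_j$ inside $W_j$.
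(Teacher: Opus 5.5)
Your proposal is correct and matches the paper's proof essentially step for step. You use the same rewriting $\sum_j w_jC_j=\sum_j\bigl(W_j\,p_{\mu(j),j}+w_j\,q_{\mu(j),j}\bigr)$ with suffix weights $W_j=\sum_{k\ge j}w_k$, the same independent per-job choice of a mode minimizing $W_j\,p_{ij}+w_j\,q_{ij}$, and the same $O\bigl(n+\sum_j\delta_j\bigr)$ running-time bound.
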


\begin{proof}
    {The completion time of job $j$ is}
    \begin{equation*}
        C_j = \sum_{k=1}^{j} p_k + q_j .
    \end{equation*}
    The total weighted completion time can be rewritten as
    \begin{equation*}
        \sum_{j=1}^n w_j C_j
        = \sum_{j=1}^n w_j \Bigl( \sum_{k=1}^{j} p_k + q_j \Bigr) = \sum_{j=1}^n p_j \Bigl( \sum_{k=j}^n w_k \Bigr) + \sum_{j=1}^n w_j q_j .
    \end{equation*}
    Set $W_j = \sum_{k=j}^n w_k$ for $j=1,\dots,n$; these values are independent of the mode choices. If job $j$ is assigned mode $i$, its contribution to the objective becomes $p_{ij}W_j + w_j q_{ij}$. The total cost is therefore the sum of $n$ independent job-wise contributions. Hence the problem decomposes into $n$ independent sub-problems: for each job $j$, {set $\mu(j)$ to} a mode $i \in \{1,\dots,\delta_j\}$ that minimizes $p_{ij}W_j + w_j q_{ij}$.  {The assignment $\mu$ obtained in this way is optimal}, and the minimal total weighted completion time equals
    \begin{equation*}
        \sum_{j=1}^n \min_{1\le i\le \delta_j} \bigl( p_{ij} W_j + w_j q_{ij} \bigr).
    \end{equation*}
    Computing $W_1,\dots,W_n$ requires $O(n)$ time, and for each job we evaluate at most $\delta_j$ candidate modes. The overall algorithm runs in $O\bigl(n + \sum_{j=1}^n \delta_j\bigr)$ time, which is clearly polynomial in the input size. Thus the problem can be solved in polynomial time.
\end{proof}

\section{Weakly NP-hard problems}\label{sec:weakly-np-hard}

\Cref{sec:fixed-sequence} shows that fixing the job sequence renders mode selection tractable for
objectives $C_{\max}$, $C_{\min}$ and $\sum w_j C_j$. We now ask two questions that are left open:
does the same tractability extend to the range objective $D_{\max}$ under a fixed sequence, and
what happens to all four objectives once the sequence itself becomes a decision variable while the
number of modes per job is restricted to the smallest nontrivial case, $\delta = 2$? {This section
answers both questions in the negative. Every hardness result in this section is obtained via a
reduction from the \textsc{Partition} problem, which is weakly NP-complete
\citep{garey1979computers}.} \Cref{sec:delta2+Dmax,sec:fix-pi+Dmax} hold the sequence fixed and
isolate $D_{\max}$ as the source of difficulty; \Cref{sec:delta2+Cmax,sec:delta2+Cmin} release the
sequence but cap the number of modes at two, showing that {the objectives $C_{\max}$ and
$C_{\min}$, both solvable by simple sorting rules in the single-mode case, already become hard once
a job can choose between just two alternatives.}

{Throughout this section, an instance of \textsc{Partition} consists of a set of positive integers $s_1,\dots,s_n$ with $\sum_{j=1}^n s_j = 2B$ and asks whether there exists a subset $S\subseteq\{1,\dots,n\}$ such that $\sum_{j\in S}s_j = B$. Without loss of generality we assume $s_j < B$ for all $j$: an instance containing an element $s_j > B$ is trivially a no-instance, and one containing an element $s_j = B$ is trivially a yes-instance.} % Jelmer: I think we can just remove this paragraph. The reader should know what Partition is.

\subsection{Problem $1 | \Mode(p_{ij}, q_{ij}), \bar{\pi}, \delta=2 | D_{\max}$}\label{sec:delta2+Dmax}

We begin with the most restrictive setting in this paper: the sequence is fixed at $\bar{\pi}$, as
in \Cref{sec:fixed-sequence}, and each job is limited to $\delta = 2$ modes. \Cref{lem:fix-pi-Cmax}
shows that under these conditions $C_{\max}$ can still be optimized in polynomial time via a greedy
feasibility test that exploits the fact that a threshold on $C_{\max}$ imposes only an upper bound
on each job's completion time. The range objective $D_{\max}$, however, imposes both an upper
\emph{and} a lower bound on every completion time simultaneously, and {this two-sided requirement
is enough to break the monotonicity argument underlying the greedy algorithm of
\Cref{lem:fix-pi-Cmax}. As the following theorem shows, tractability is not preserved under this
change of objective.}

\begin{theorem}\label{them:fix-pi-f}
    Problem $1 | \Mode(p_{ij},q_{ij}), \bar\pi, \delta=2 | D_{\max}$ is NP-hard.
\end{theorem}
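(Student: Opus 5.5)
The plan is to reduce from \textsc{Partition} as defined at the start of this section. Given $s_1,\dots,s_n$ with $\sum_j s_j=2B$, I will build an instance of $1|\Mode(p_{ij},q_{ij}),\bar\pi,\delta=2|D_{\max}$ with a fixed sequence. Each ``element'' job encodes the binary choice of whether $s_j$ goes into the subset $S$ through its processing time. A few ``guard'' jobs with a single effective mode will pin the time window. Then $D_{\max}\le B$ will hold iff the total processing time equals $B$. Fix a constant $K\ge B$, for instance $K=B$, so that all tails are nonnegative. A guard job is given two identical modes, so that $\delta=2$ formally.

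The fixed sequence $\bar\pi$ is $(a,b,1,2,\dots,n,c,d)$. Guard $a$ has modes $(0,K)$ and guard $b$ has modes $(0,K+B)$. Element job $j$ has the two modes $(s_j,K)$ (``$j\in S$'') and $(0,K)$ (``$j\notin S$''). Guard $c$ has modes $(0,K-B)$ and guard $d$ has modes $(0,K)$. Write $P=\sum_{j\in S}s_j$ for the total processing time chosen, where $S$ is the set of element jobs placed in their first mode. Since all guards have zero processing time, the completion times are as follows:
\begin{itemize}
\item $C_a=K$ and $C_b=K+B$;
\item each element job has $C_j=S_{j-1}+p_j+K\in[K,K+P]$;
\item $C_c=P+K-B$ and $C_d=P+K$.
\end{itemize}

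The key steps are these. First, since $C_a$ and $C_b$ are both present, $D_{\max}\ge B$ for every assignment. Second, if $P=B$, then every completion time lies in $[K,K+B]$: the element jobs lie in $[K,K+P]=[K,K+B]$, $C_c=K$, and $C_d=K+B$. Hence $D_{\max}=B$. Third, conversely, suppose $D_{\max}\le B$. Then the window must be exactly $[K,K+B]$, because $C_a=K$ and $C_b=K+B$ already span length $B$. From $C_d=K+P\le K+B$ we get $P\le B$, and from $C_c=K-B+P\ge K$ we get $P\ge B$. Thus $P=B$, and $S$ is a partition. Therefore the \textsc{Partition} instance is a yes-instance iff the optimal $D_{\max}$ equals $B$. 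The construction is clearly polynomial, with numbers of size $O(B)$.

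The only delicate point is checking that the element jobs can never push a completion time outside the window $[K,K+B]$ in the ``yes'' direction. This holds because each element completion time equals $K$ plus a partial sum of the chosen $s_j$, which is at most $P$. I also need to ensure that guard $c$ has a nonnegative tail, which is why $K\ge B$. The two-sided nature of $D_{\max}$ is exactly what the pair of trailing guards $c$ and $d$ exploits: one forces $P\ge B$ and the other forces $P\le B$. A one-sided objective such as $C_{\max}$, handled in \Cref{lem:fix-pi-Cmax}, cannot do both.
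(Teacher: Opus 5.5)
Your proposal is correct and is essentially the paper's reduction: two leading zero-processing anchors pin the window, element jobs choose between processing $s_j$ and $0$ with a common tail, and two trailing zero-processing test jobs force $X\ge B$ and $X\le B$. The only difference is the calibration, which is immaterial: the paper uses the window $[2B,4B]$ with threshold $2B$ and single-mode guards, whereas you use $[K,K+B]$ with threshold $B$ and duplicated modes.
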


\begin{proof}
    Membership in NP is straightforward: given a mode assignment $\mu$, the completion times{, and hence $C_{\max}$, $C_{\min}$, and $D_{\max}=C_{\max}-C_{\min}$,} can be computed in polynomial time. We prove NP-hardness by a reduction from \textsc{Partition}. {Given an instance $s_1,\dots,s_n$ of \textsc{Partition}, we build a scheduling instance whose fixed job order is}
    \begin{align*}
        A_L,\ A_H,\ J_1,\dots,J_n,\ F_L,\ F_H .
    \end{align*}
    The two anchor jobs have a single mode each:
    \begin{align*}
        A_L:\ (p,q)=(0,2B),\qquad A_H:\ (p,q)=(0,4B).
    \end{align*}
    For each $s_j$, create an ordinary job $J_j$ with two modes:
    \begin{align*}
        \text{mode }1:\ (p,q)=(s_j,2B),\qquad
        \text{mode }2:\ (p,q)=(0,2B).
    \end{align*}
    Finally, add two single-mode testing jobs:
    \begin{align*}
        F_L:\ (p,q)=(0,B),\qquad F_H:\ (p,q)=(0,3B),
    \end{align*}
    and set the threshold $K = 2B$.

    For a given mode assignment $\mu$, let $S = \{j \mid J_j \text{ is assigned mode }1\}$ and $X = \sum_{j\in S} s_j$. The total processing time of the ordinary jobs equals $X$, because mode~$1$ contributes $s_j$ and mode~$2$ contributes $0$.

    The completion times of the anchor jobs are $C(A_L)=2B$ and $C(A_H)=4B$. Consequently every schedule satisfies $C_{\min}\le 2B$, $C_{\max}\ge 4B$, and therefore
    \begin{align*}
        D_{\max}=C_{\max}-C_{\min}\ge 2B .
    \end{align*}
    Thus a schedule meets $D_{\max}\le K=2B$ if and only if all completion times lie in the interval $[2B,4B]$.

    Now examine an ordinary job $J_j$. Because $J_j$ appears after $A_L$ and $A_H$, its completion time equals the cumulative processing time of all jobs up to itself (including $J_j$) plus its tail $2B$. The cumulative processing time at $J_j$ is {a partial sum $P_j$ of the selected sizes, with $0\le P_j\le 2B$}, hence
    \begin{align*}
        C(J_j) = P_j + 2B \in [2B,4B].
    \end{align*}
    Thus ordinary jobs never violate the required interval.

    It remains to check the two testing jobs. They have zero processing time and start after all ordinary jobs, so their completion times are
    \begin{align*}
        C(F_L) = X + B,\qquad C(F_H) = X + 3B .
    \end{align*}
    For both to lie in $[2B,4B]$ we need
    \begin{align*}
        X + B \ge 2B \quad\text{and}\quad X + 3B \le 4B,
    \end{align*}
    which is equivalent to $X = B$.

    We now show the equivalence. If the \textsc{Partition} instance has a solution $S$ with $\sum_{j\in S}s_j = B$, assign mode~$1$ to $J_j$ for $j\in S$ and mode~$2$ to the remaining ordinary jobs. Then $X = B$, and all completion times fall into $[2B,4B]$ with $A_L$ and $A_H$ attaining the endpoints. Hence $C_{\min}=2B$, $C_{\max}=4B$, and $D_{\max}=2B=K$.

    Conversely, suppose the constructed instance admits a mode assignment with $D_{\max}\le K$. Since the anchor jobs force $D_{\max}\ge 2B$, we must have $D_{\max}=2B$ and every completion time in $[2B,4B]$. In particular, $C(F_L)=X+B\ge 2B$ and $C(F_H)=X+3B\le 4B$, which gives $X = B$. The set of jobs assigned mode~$1$ then constitutes a subset $S$ with $\sum_{j\in S}s_j = B$, solving
    \textsc{Partition}.

    {The construction uses $n+4$ jobs with at most two modes per job and can be carried out in time polynomial in the size of the \textsc{Partition} instance.} The decision problem is therefore NP-complete, and the corresponding optimization problem is NP-hard.
\end{proof}

\begin{remark}
    {In terms of the feasibility test of \Cref{lem:fix-pi-Cmax}, a threshold on $C_{\max}$ imposes only upper bounds $S_{j-1}+p_{ij}+q_{ij}\le T$, so choosing a feasible mode with smallest processing time preserves future feasibility, whereas a threshold $K$ on $D_{\max}$ imposes the two-sided window constraints $L\le S_{j-1}+p_{ij}+q_{ij} \le L+K$, whose lower bounds destroy this monotonicity. The reduction above exploits exactly the two-sided window requirement to encode the equality $\sum_{j\in S}s_j=B$.}
\end{remark}

\subsection{Problem $1 | \Mode(p_{ij}, q_{ij}), \bar{\pi} | D_{\max}$}\label{sec:fix-pi+Dmax}

\Cref{them:fix-pi-f} shows that, even with only two modes per job, minimizing $D_{\max}$ under a
fixed sequence is NP-hard. A natural question is whether relaxing $\delta = 2$ to an unrestricted
number of modes pushes this hardness from weak to strong, which would mirror the pattern we later
observe for the free-sequence problems of \Cref{sec:strongly-np-hard}, where lifting the mode-count
restriction upgrades {the weak hardness of \Cref{thm:d2Cmax} for $C_{\max}$ into the strong
hardness of \Cref{thm:Cmax-strong}}. Somewhat surprisingly, the answer is negative, as the
following theorem demonstrates.

\begin{theorem}\label{thm:fix-pi-f-var-d}
    Problem $1 | \Mode(p_{ij},q_{ij}),\bar\pi | D_{\max}$ admits a pseudo-polynomial-time algorithm, even when the number of modes is part of the input.
\end{theorem}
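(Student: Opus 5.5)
Our plan is to enumerate a candidate lower bound $L$ on $C_{\min}$ and, for each $L$, solve a constrained $C_{\max}$-minimization problem by dynamic programming over the cumulative processing time. Relabel the jobs so that $\bar\pi=(1,\dots,n)$, as in \Cref{sec:fixed-sequence}. Let $P=\sum_j \max_i p_{ij}$ and $Q=\max_{i,j}q_{ij}$. Every completion time is an integer in $[0,P+Q]$, so $C_{\min}$ of an optimal schedule is one of at most $P+Q+1$ values.

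Fix $L\in\{0,\dots,P+Q\}$. Among mode assignments with every $C_j\ge L$, we minimize $C_{\max}$. Define $g_j(s)$ as the minimum over assignments of jobs $1,\dots,j$ with cumulative processing time $S_j=s$ and all $C_k\ge L$ ($k\le j$) of $\max_{k\le j}C_k$, with value $+\infty$ if no such assignment exists. Set $g_0(0)=0$ and $g_0(s)=+\infty$ for $s>0$. The recursion is
\[
g_j(s)=\min_{i:\ p_{ij}\le s,\ s+q_{ij}\ge L}\max\bigl\{g_{j-1}(s-p_{ij}),\ s+q_{ij}\bigr\},
\]
since $C_j=S_{j-1}+p_{ij}+q_{ij}=s+q_{ij}$. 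This recursion is valid because the completion times of jobs $1,\dots,j$ depend only on the modes of those jobs, and future jobs interact with the past only through $S_j$. So, for a fixed $S_j$, it suffices to keep the best value of the running maximum subject to the running constraint. Let $M(L)=\min_s g_n(s)$; the candidate value is then $M(L)-L$ whenever $M(L)<\infty$.

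We then claim that the optimal $D_{\max}$ equals $\min_L\bigl(M(L)-L\bigr)$. One direction holds because any feasible assignment for $L$ has $C_{\min}\ge L$, so its $D_{\max}\le M(L)-L$. The other holds because an optimal schedule $\mu^*$ is feasible for $L=C_{\min}(\mu^*)$, which gives $M(L)-L\le D^*$. The optimal assignment is recovered by standard backtracking.

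Each table has $O(P)$ entries per job and costs $O(\delta_j)$ per entry. The total running time is therefore $O\bigl((P+Q)\,P\sum_j\delta_j\bigr)$, which is pseudo-polynomial and polynomial in $\sum_j\delta_j$, so the number of modes may be part of the input. The step needing the most care is justifying that the state $(j,S_j)$ with a single stored value suffices. This is where the two-sided window, which breaks the greedy argument of \Cref{lem:fix-pi-Cmax}, is handled: the lower side is enforced as a hard constraint for the guessed $L$, and only the upper side is optimized. The hardness part of the summary, weak NP-hardness, already follows from \Cref{them:fix-pi-f}.
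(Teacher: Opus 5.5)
Your proof is correct and follows essentially the same route as the paper: enumerate the lower endpoint of the completion-time window and run a dynamic program over the state $(j,S_j)$, with the lower bound imposed as a hard constraint. The one difference is that you store the minimum running maximum rather than the paper's Boolean reachability sets $R_k(\ell)$ for a fixed window $[\ell,\ell+K]$, so you obtain the optimal $D_{\max}$ directly instead of only the decision version for a given $K$ (the paper leaves the search over $K$ implicit); one small wording fix is that in your first direction it is the assignment attaining $M(L)$, not every assignment feasible for $L$, that satisfies $D_{\max}\le M(L)-L$.
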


\begin{proof}
    Since the permutation is fixed, we relabel the jobs so that the processing order is $\bar\pi=(1,2,\dots,n)$. For a mode assignment $\mu$, define the cumulative processing time after job $k$ as
    \begin{align*}
        S_k(\mu)=\sum_{j=1}^k p_{\mu(j),j},
    \end{align*}
    and the {completion time of job $k$} as
    \begin{align*}
        U_k(\mu)=S_k(\mu)+q_{\mu(k),k}.
    \end{align*}
    The objective can then be written as
    \begin{align*}
        D_{\max}(\bar\pi,\mu) = \max_{1\le k\le n} U_k(\mu) - \min_{1\le k\le n} U_k(\mu).
    \end{align*}

    Consider the decision version: given an integer $K$, determine whether there exists a mode assignment $\mu$ with $D_{\max}(\bar\pi,\mu)\le K$. This holds exactly when all $U_k(\mu)$ lie in some interval $[\ell,\ell+K]$. Let
    \begin{align*}
        A = \sum_{j=1}^n \max_{1\le i\le \delta_j} p_{ij}, \qquad B = \max_{i,j} q_{ij}.
    \end{align*}
    Every cumulative processing time belongs to $\{0,1,\dots,A\}$, hence every possible value of $\min_k U_k(\mu)$ lies in $\{0,1,\dots,A+B\}$. It therefore suffices to enumerate all candidate lower endpoints $\ell \in \{0,1,\dots,A+B\}$ and test, for each $\ell$, whether a feasible mode assignment exists with $U_k(\mu)\in[\ell,\ell+K]$ for all $k$.

    For a fixed $\ell$, we test feasibility by dynamic programming. Let $R_k(\ell) \subseteq \{0,1,\dots,A\}$ be the set of cumulative processing times attainable after jobs $1,\dots,k$ while keeping $U_1,\dots,U_k$ inside $[\ell,\ell+K]$. Initialize $R_0(\ell)=\{0\}$. For $k=1,\dots,n$, we build $R_k(\ell)$ from $R_{k-1}(\ell)$:
    \begin{align*}
        R_k(\ell)= \bigl\{s + p_{ik} \mid s\in R_{k-1}(\ell), 1\le i\le \delta_k,
        \ell \le s + p_{ik} + q_{ik} \le \ell+K \bigr\}.
    \end{align*}
    If $R_n(\ell)\neq\varnothing$, the window $[\ell,\ell+K]$ is feasible; the original decision problem is a yes-instance precisely when this happens for some $\ell$.

    Correctness is straightforward. If the dynamic program produces a nonempty $R_n(\ell)$, the sequence of modes that generated the surviving state yields $U_k(\mu)\in[\ell,\ell+K]$ for all $k$, so $D_{\max}(\bar\pi,\mu)\le K$. Conversely, if some mode assignment $\mu$ satisfies $D_{\max}(\bar\pi,\mu)\le K$, set $\ell = \min_k U_k(\mu)$. Then $\ell \le U_k(\mu)\le \ell+K$ for every $k$, and an induction on $k$ shows $S_k(\mu)\in R_k(\ell)$, hence $R_n(\ell)\neq\varnothing$.

    For one value of $\ell$, the dynamic program examines at most $A+1$ states per job, and for each state considers at most $\delta_k$ transitions. Thus the time per $\ell$ is $O\bigl(A\sum_{k=1}^n \delta_k\bigr)$. With $A+B+1$ choices of $\ell$, the total running time becomes
    \begin{align*}
        O\Bigl((A+B+1)A\sum_{k=1}^n \delta_k\Bigr),
    \end{align*}
    which is pseudo-polynomial in the numerical input values.
\end{proof}

\subsection{Problem $1 | \Mode(p_{ij}, q_{ij}), \delta=2 | C_{\max}$}\label{sec:delta2+Cmax}

We now turn from the fixed-sequence setting of \Cref{sec:delta2+Dmax,sec:fix-pi+Dmax} to the
general problem in which both the permutation and the mode assignment must be determined jointly,
and ask how few modes are needed before this joint decision becomes hard. The answer is as small as
possible: already at $\delta = 2$, minimizing $C_{\max}$ (which is trivially solvable in $O(n \log
n)$ time when $\delta = 1$ by \Cref{lem:d1Cmax}) becomes NP-hard (\Cref{thm:d2Cmax}). The proof
exploits the fact that the scheduler must now also decide where in the sequence the mode-1 and
mode-2 jobs belong. We show {that some optimal schedule} groups jobs by non-increasing tail time,
allowing a Partition instance to be embedded in the resulting two-block structure. We then show,
via a direct reduction from this $C_{\max}$ result, that $D_{\max}$ is likewise NP-hard at $\delta
= 2$ (\Cref{cor:d2f}), by appending a single auxiliary job whose large tail forces it to open the
schedule in any optimum, collapsing the $D_{\max}$ objective on the augmented instance to the
$C_{\max}$ objective on the original one.

\begin{theorem}\label{thm:d2Cmax}
    Problem $1 | \Mode(p_{ij}, q_{ij}), \delta=2 | C_{\max}$ is at least weakly NP-hard.
\end{theorem}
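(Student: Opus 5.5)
We reduce from \textsc{Partition} with a construction in which each job trades machine time for tail time, so that the makespan splits into two competing lower bounds that meet only when the mode-$1$ sizes sum to exactly $B$. Given positive integers $s_1,\dots,s_n$ with $\sum_j s_j=2B$, we build $n$ jobs $J_1,\dots,J_n$, each with $\delta_j=2$ modes:
\begin{align*}
\text{mode }1:\ (p,q)=(s_j,\,2B),\qquad \text{mode }2:\ (p,q)=(2s_j,\,0).
\end{align*}
We set the threshold $K=3B$ and claim that the instance admits a schedule with $C_{\max}\le 3B$ if and only if the \textsc{Partition} instance is a yes-instance. The construction is clearly polynomial, and membership of the decision version in NP is immediate.

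For the lower bound, fix any $(\pi,\mu)$. Let $S$ be the set of jobs in mode~$1$ and $X=\sum_{j\in S}s_j$. The machine is busy without idle time for the total processing time
\begin{align*}
X+2(2B-X)=4B-X,
\end{align*}
and every tail is nonnegative, so $C_{\max}\ge 4B-X$. If $S\neq\varnothing$, consider the mode-$1$ job that is processed last. Its completion time is at least the processing time of all mode-$1$ jobs plus its tail, so $C_{\max}\ge X+2B$. If $S=\varnothing$, then $X=0$ and $C_{\max}\ge 4B>3B$. In either case
\begin{align*}
C_{\max}\ \ge\ \max\{4B-X,\;X+2B\}\ \ge\ 3B,
\end{align*}
and equality forces $X=B$. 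Hence any schedule with $C_{\max}\le 3B$ yields a subset $S$ with $\sum_{j\in S}s_j=B$.

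Conversely, let $S$ be a solution of \textsc{Partition}. Assign mode~$1$ to the jobs in $S$ and mode~$2$ to the rest, and sequence all mode-$1$ jobs first. This is the non-increasing-tail order of \Cref{lem:d1Cmax}, though the bound can be verified directly. Every mode-$1$ job completes by $B+2B=3B$. Every mode-$2$ job has zero tail and completes by the total processing time $B+2\cdot B=3B$. So $C_{\max}=3B=K$.

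The only step needing care is the lower bound: we must argue about an arbitrary permutation rather than the Jackson order. The two bounds above hold for every sequence, including the degenerate case with no mode-$1$ jobs, so no exchange argument is required. Since \textsc{Partition} is weakly NP-complete, the problem is at least weakly NP-hard.
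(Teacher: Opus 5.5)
Your proof is correct and uses exactly the paper's reduction (modes $(s_j,2B)$ and $(2s_j,0)$, threshold $3B$) and the same key inequality $\max\{4B-X,\,X+2B\}\ge 3B$, with equality if and only if $X=B$. The only difference is in the hardness direction: you bound an arbitrary sequence directly (via the total processing time and the last mode-$1$ job), whereas the paper first restricts to the mode-$1$-first order using the interchange argument of \Cref{lem:d1Cmax} and then computes the makespan exactly, so your version is marginally more self-contained.
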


\begin{proof}
    We reduce from \textsc{Partition}. Given an instance, construct a scheduling instance with $n$ jobs. For each $j$, define two modes:
    \begin{align*}
        \text{Mode 1: } (p_{1j},q_{1j})=(s_j,2B),\qquad
        \text{Mode 2: } (p_{2j},q_{2j})=(2s_j,0).
    \end{align*}
    Let $x=\sum_{j\in M_1}s_j$, where $M_1$ is the set of jobs assigned to mode 1. Then $\sum_{j\in M_2}2s_j = 2(2B-x)=4B-2x$.

    By the standard interchange argument (\Cref{lem:d1Cmax}), there exists an optimal schedule in which all mode-1 jobs precede all mode-2 jobs, with arbitrary order within each group. Under such a schedule, the last mode-1 job completes at $C_1=x+2B$, while the last mode-2 job completes at $C_2=x+(4B-2x)=4B-x$. Hence the makespan is
    \begin{align*}
        C_{\max}=\max\{x+2B, 4B-x\}.
    \end{align*}
    Since $x$ is a subset sum of the $s_j$'s, observe that
    \begin{align*}
        C_{\max} = \begin{cases} 4B-x \ge 3B, & x\le B, \\ x+2B \ge 3B, & x\ge B,
                   \end{cases}
    \end{align*}
    with equality to $3B$ if and only if $x=B$.

    If \textsc{Partition} has a solution $S$, assigning mode 1 to $S$ yields $x=B$ and hence $C_{\max}=3B$. Conversely, if the scheduling instance admits a schedule with makespan {at most} $3B$, then $C_{\max}=\max\{x+2B,4B-x\}=3B$, forcing $x=B$. The jobs in $M_1$ then form a valid subset summing to $B$.

    The reduction is polynomial. Since \textsc{Partition} is NP-complete, the theorem follows.
\end{proof}

\begin{corollary}\label{cor:d2f}
    Problem $1 | \Mode(p_{ij}, q_{ij}), \delta=2 | D_{\max}$ is at least NP‑hard in the ordinary sense.
\end{corollary}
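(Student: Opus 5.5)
The plan is to reduce from $1|\Mode(p_{ij},q_{ij}),\delta=2|C_{\max}$, which is NP-hard by \Cref{thm:d2Cmax}. Starting from the instance built in that proof (jobs $j=1,\dots,n$ with modes $(s_j,2B)$ and $(2s_j,0)$), I will add one auxiliary single-mode job $J_0$ with $(p,q)=(0,Q)$. The tail $Q$ will be chosen so large that $J_0$ should be placed first in some optimal schedule, and so that $J_0$ is the job attaining $C_{\min}$. A natural choice is $Q=3B$, the target makespan. The decision question is then whether some schedule has $D_{\max}\le K$ for a suitable $K$.

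Zero processing time for $J_0$ is the key feature. Placing $J_0$ at position $1$ does not shift any other job, so the completion times of the original jobs are exactly as in the original instance. Then $C(J_0)=Q$.

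\emph{Forward direction.} Suppose \textsc{Partition} has a solution. Take the schedule from \Cref{thm:d2Cmax} with $C_{\max}=3B$ and prepend $J_0$. Every original completion time is at most $3B=Q$, so $C_{\max}=3B$. For $D_{\max}$ to equal $C_{\max}$ minus something fixed, I need control of $C_{\min}$. A mode-2 job placed first with a small $s_j$ can finish very early, which makes $C_{\min}$ hard to pin down. To avoid this, I will modify the instance and add a second anchor job $J'$ with $(p,q)=(0,0)$, placed first. Then $C(J')=0$, so $C_{\min}=0$ is forced for every schedule. With the anchor in place, $D_{\max}=C_{\max}$, and I set $K=3B$.

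Adding a $(0,0)$ job alone already does most of the work. It can always be put first without affecting anything else, and then $C_{\min}=0$. So $\min D_{\max}$ on the augmented instance equals $\min C_{\max}$ on the original instance.

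\emph{Backward direction.} Take any schedule of the augmented instance. Removing $J'$ leaves the other completion times unchanged, because $J'$ has zero processing time, and cannot increase $C_{\max}$. Also $C_{\min}\le C(J')\le C(J')+\text{(start time)}$, and in fact $C(J')\le$ every original job's completion time only if $J'$ is first. The cleanest argument is to compare with moving $J'$ to the front. This move does not change any other job's completion time and sets $C(J')=0$, which can only lower $C_{\min}$ and so does not help by itself. So I will instead bound directly: in any schedule, $C_{\max}\ge$ the original-instance $C_{\max}$ of the induced schedule, which is at least $3B$ with equality iff $x=B$. Hence $D_{\max}\le 3B$ requires $C_{\max}-C_{\min}\le 3B$ with $C_{\min}\le C(J')$.

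This is the main obstacle: $C(J')$ equals the start time of $J'$ and can be positive if $J'$ is not first, which raises $C_{\min}$. To neutralize this, I will give the auxiliary job a large tail instead, $(0,M)$ with $M$ huge, say $M=10B$. It then always attains $C_{\max}=M+(\text{its start time})$. Every original job finishes by $6B<M$. So $C_{\max}=M+t_0$ where $t_0$ is its start, and $D_{\max}=M+t_0-C_{\min}$.

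This reduces the problem to the $\max C_{\min}$ structure, and NP-hardness would then follow from \Cref{thm:d2Cmin}-style hardness rather than from $C_{\max}$. I will therefore check carefully which auxiliary tail makes the objective collapse to $C_{\max}$, as the section text announces, and verify the inequality chain in both directions. I expect this calibration of the auxiliary job's parameters to be the crux of the proof.
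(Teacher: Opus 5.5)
Your proposal never reaches a working reduction. It ends by saying you will still look for the right auxiliary tail, and the one concrete equivalence you assert along the way is false. You claim that adding $J'=(0,0)$ makes $\min D_{\max}$ on the augmented instance equal to $\min C_{\max}$ on the original instance, because $J'$ ``can always be put first''. But the scheduler need not put it first. Since $J'$ has zero processing time, placing it late raises $C(J')$, and hence $C_{\min}$, without changing any other completion time. Concretely, in the instance of \Cref{thm:d2Cmax}, assign every job mode~1 and place $J'$ last. The original jobs then complete in $(2B,4B]$ and $C(J')=2B$; your job $(0,3B)$, if present and placed first, completes at $3B$. So $D_{\max}=2B<3B=K$ for \emph{every} \textsc{Partition} instance, yes or no.

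The root cause is that every auxiliary job you try has zero processing time. Such a job never shifts the original jobs, so $C_{\min}$ stays under the scheduler's control through them, and no choice of tail can fix this. The missing idea, which the paper uses, is an auxiliary job with a large \emph{processing time} and zero tail. For an arbitrary instance $\mathcal I$ of the $\delta=2$ makespan problem, add $J_0=(M,0)$ with $M=2(A+\bar q)+1$, $A=\sum_j\max_i p_{ij}$ and $\bar q=\max_{i,j}q_{ij}$, to obtain $\mathcal I'$. If $J_0$ is first, it completes at $M$ and every other job completes at $M$ plus its completion time in the induced schedule of $\mathcal I$. Hence $C_{\min}=M$ and $D_{\max}=C_{\max}^{\mathcal I}$. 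If $J_0$ is not first, the first job completes by $A+\bar q$ while $J_0$ completes no earlier than $M$. Then $D_{\max}\ge A+\bar q+1$, which exceeds every makespan of $\mathcal I$. Therefore $\mathrm{OPT}_D(\mathcal I')=\mathrm{OPT}_C(\mathcal I)$. The section's phrase ``large tail'' is misleading; it is the large processing time that does the work.

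Your last idea, $J_0=(0,M)$ with $M\ge A+\bar q$, can also be completed, but as a reduction from maximizing $C_{\min}$ rather than from $C_{\max}$. Here $J_0$ attains $C_{\max}=M+t_0$ while $C_{\min}=C_{\min}^{\mathcal I}$, so $\min D_{\max}=M-\max C_{\min}^{\mathcal I}$. This gives hardness only when applied to the instance of \Cref{thm:d2Cmin} (e.g.\ $M=7B$ and $K=4B$). On the $C_{\max}$ instance you started from, maximizing $C_{\min}$ is trivial. The first job finishes by $2B+\max_j s_j$, and putting all jobs in mode~1 with the largest $s_j$ first attains this bound. As written, the proposal commits to neither route.
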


\begin{proof}
    We reduce from the NP-hard problem $1|\Mode(p_{ij},q_{ij}),\delta=2|C_{\max}$ (\Cref{thm:d2Cmax}). Let $\mathcal{I}$ be an instance with $n$ jobs, each having at most two modes with parameters $(p_{ij},q_{ij})$. Define
    \begin{align*}
        A=\sum_{j=1}^n \max_i p_{ij},\qquad B=\max_{i,j} q_{ij},\qquad M=2(A+B)+1.
    \end{align*}
    Construct $\mathcal{I}'$ by retaining all original jobs and adding a new single-mode job $J_0$ with $(p_0,q_0)=(M,0)$. {The construction is clearly polynomial.}

    Consider any schedule of $\mathcal{I}'$. If $J_0$ is not first, then the first job completes by time $A+B$, while $J_0$ completes no earlier than $M$; hence $D_{\max}\ge M-(A+B)=A+B+1$. If $J_0$ is first, then it completes at $M$, and
    \begin{align*}
        D_{\max}=C_{\max}^{\mathcal{I}},
    \end{align*}
    where $C_{\max}^{\mathcal{I}}$ is the makespan of the induced schedule on the original jobs.

    Since every schedule of $\mathcal{I}$ has makespan at most $A+B$, any optimal schedule of $\mathcal{I}'$ must place $J_0$ first. Therefore $\mathrm{OPT}_{D}(\mathcal{I}')=\mathrm{OPT}_{C}(\mathcal{I})$. It follows that $\mathcal{I}$ admits a schedule with $C_{\max}\le T$ if and only if $\mathcal{I}'$ admits a schedule with $D_{\max}\le T$. The reduction is polynomial, so the claimed NP-hardness follows.
\end{proof}

\subsection{Problem $1 | \Mode(p_{ij}, q_{ij}), \delta=2 | C_{\min}$}\label{sec:delta2+Cmin}

We complete our analysis on $\delta = 2$ with the objective $C_{\min}$, which asks for a schedule maximizing the earliest completion time among all jobs. As in \Cref{sec:delta2+Cmax}, this objective is already NP-hard with two modes per job, even though the single-mode version (\Cref{lem:d1Cmin}) is solvable by a simple sorting rule. The reduction again relies on an exchange argument showing that an optimal schedule orders jobs by non-increasing $p_{ij} + q_{ij}$, then embeds a Partition instance in the resulting two-block schedule, which is structurally similar to the reduction of \Cref{thm:d2Cmax}, but calibrated to $C_{\min}$ rather than $C_{\max}$.

\begin{theorem}\label{thm:d2Cmin}
    The single-machine scheduling problem $1| \Mode(p_{ij}, q_{ij}), \delta=2| C_{\min}$ is at least weakly NP-hard.
\end{theorem}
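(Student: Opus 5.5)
The plan is to reduce from \textsc{Partition}. I would give every mode of every job a \emph{constant} value of $p+q$ within its mode class, so that after the sorting rule of \Cref{lem:d1Cmin} the minimum completion time of each block is independent of which particular jobs fall into it. One block will then produce a term increasing in $x=\sum_{j\in M_1}s_j$, and a single auxiliary job will produce a term decreasing in $x$. This mirrors the two-term structure $\max\{x+2B,4B-x\}$ in the proof of \Cref{thm:d2Cmax}, now with a $\min$.

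\textbf{Construction.} Given $s_1,\dots,s_n$ with $\sum_j s_j=2B$ and $s_j<B$, create for each $j$ a job $J_j$ with two modes:
\begin{align*}
\text{mode 1: }(p,q)=(s_j,\,10B-s_j),\qquad \text{mode 2: }(p,q)=(2s_j,\,2B-2s_j).
\end{align*}
Both tails are nonnegative because $s_j<B$. Add one single-mode dummy job $J_0$ with $(p,q)=(0,0)$, and set the threshold $L=3B$.

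\textbf{Completion times under a fixed mode assignment.} Fix $\mu$, let $M_1$ be the set of jobs in mode 1 and $x=\sum_{j\in M_1}s_j$. The total processing time is then $x+2(2B-x)=4B-x$. Once modes are fixed the instance is single-mode, so by \Cref{lem:d1Cmin} some optimal order sorts jobs by non-increasing $p+q$:
\begin{itemize}
\item first the mode-1 jobs, each with $p+q=10B$;
\item then the mode-2 jobs, each with $p+q=2B$;
\item then $J_0$, with $p+q=0$.
\end{itemize}
Ties within a block are irrelevant. Within each block, completion times increase along the block, because each job's completion equals the cumulative processing before it plus the constant value of $p+q$. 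So each block's minimum is attained by its first job:
\begin{itemize}
\item the first mode-1 job (if any) completes at $10B$;
\item the first mode-2 job (if any) completes at $x+2B$;
\item $J_0$ completes at $4B-x$.
\end{itemize}
Hence
\begin{align*}
\max_\pi C_{\min}(\pi,\mu)=\min\{10B,\;x+2B,\;4B-x\},
\end{align*}
where terms for empty blocks are omitted. If $M_2=\varnothing$ then $x=2B$ and the $J_0$ term equals $2B<3B$, so that degenerate case is harmless.

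\textbf{Equivalence.} I would then show that a schedule with $C_{\min}\ge 3B$ exists if and only if some $x$ satisfies $x+2B\ge 3B$ and $4B-x\ge 3B$, that is, $x=B$. Forward direction: a \textsc{Partition} solution $S$ gives modes with $x=B$, hence $C_{\min}=3B$. Converse: any schedule, and in particular any non-sorted one, is dominated by the sorted order for its own mode assignment. So $C_{\min}\ge 3B$ forces $x=B$, and $M_1$ is a partition subset. The reduction is clearly polynomial.

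\textbf{Main obstacle.} The delicate step is the converse: the free permutation must not allow beating the sorted order. I handle this by invoking \Cref{lem:d1Cmin} per mode assignment rather than arguing directly about mixed orders. The constant-$p+q$ design is what keeps the block minima from depending on individual $s_j$.
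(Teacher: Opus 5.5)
Your proposal is correct and is essentially the paper's proof. Both use the same \textsc{Partition} reduction: mode 2 is $(2s_j,2B-2s_j)$, there is a $(0,0)$ dummy job, the threshold is $3B$, and \Cref{lem:d1Cmin} reduces to the block order, giving $\min\{\cdot,\,x+2B,\,4B-x\}$ and forcing $x=B$. The only difference is that the paper takes mode 1 as $(s_j,3B-s_j)$, so the first block contributes exactly $3B$, whereas your $(s_j,10B-s_j)$ makes that block non-binding; either choice works.
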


\begin{proof}
    We reduce from \textsc{Partition}{, recalling the assumption $s_j < B$ for all $j$}. Given an instance, construct a scheduling instance with $n+1$ jobs. For each $j=1,\dots,n$, define two modes:
    \begin{align*}
        \text{Mode 1: } (p_{1j},q_{1j})=(s_j, 3B-s_j),\qquad
        \text{Mode 2: } (p_{2j},q_{2j})=(2s_j, 2B-2s_j).
    \end{align*}
    Add an auxiliary job $J_{n+1}$ with a single mode $(p,q)=(0,0)$. All parameters are nonnegative integers since $s_j<B$. The decision version asks whether there exists a schedule with $\min_j C_j \ge 3B$.

    {For any fixed mode assignment, there exists an optimal sequence that orders the jobs by non-increasing $p_{ij}+q_{ij}$ (\Cref{lem:d1Cmin}).} In our instance, $p_{1j}+q_{1j}$ equals $3B$ for Mode 1 jobs, $p_{2j}+q_{2j}$ equals $2B$ for Mode 2 jobs, and {the sum of processing and tail time equals $0$ for job $J_{n+1}$}. Hence an optimal schedule processes all Mode-1 jobs first, then all Mode-2 jobs, and finally $J_{n+1}$; the order within each group is arbitrary.

    Let $S\subseteq\{1,\dots,n\}$ be the set of jobs assigned Mode 1, and let $X=\sum_{j\in S}s_j$. The remaining jobs are assigned Mode 2. Under the canonical order, the Mode-1 jobs start at time $0$ and have minimum completion time $3B$; the Mode-2 jobs start at time $X$ and have minimum completion time $X+2B$; and $J_{n+1}$ completes at $4B-X$. Thus
    \begin{align*}
        \min_j C_j = \min\{3B, X+2B, 4B-X\}.
    \end{align*}
    Since $X\in[0,2B]$, we have
    \begin{align*}
        \min_j C_j = \begin{cases}
                         X+2B, & X\le B, \\ 4B-X, & X\ge B,
                     \end{cases}
    \end{align*}
    so $\min_j C_j \le 3B$, with equality if and only if $X=B$.

    If \textsc{Partition} has a solution $S$, assigning Mode 1 to $S$ yields $X=B$ and hence $\min_j C_j=3B$. Conversely, if a schedule attains $\min_j C_j\ge 3B$, then by the structural property we may assume the canonical order without loss of optimality; the above formula gives $\min\{X+2B,4B-X\}\ge 3B$, which forces $X=B$. The jobs in $S$ then form a valid \textsc{Partition} solution.

    Thus the decision version is NP-complete, and the optimization problem is at least weakly NP-hard.
\end{proof}

\section{Strongly NP-hard problems}\label{sec:strongly-np-hard}

The results of \Cref{sec:weakly-np-hard} show that once mode selection and sequencing are decided
jointly, hardness sets in already at $\delta = 2${; however, since every reduction there originates
from \textsc{Partition}, those results establish NP-hardness in the weak sense only}. We now show
that when $\delta$ is unrestricted, treated as part of the input rather than a fixed constant,
{every objective in this paper becomes strongly NP-hard}. Each result is obtained via a reduction
from the strongly NP-complete \textsc{3-Partition} problem, using a number of modes per job that
grows with the size of the \textsc{3-Partition} instance.

{Throughout this section, an instance of \textsc{3-Partition} consists of $3n$ positive integers
$s_1,\dots,s_{3n}$ and a positive integer $B$ such that
\begin{equation*}
    \frac{B}{4} < s_j < \frac{B}{2}\quad (1\le j\le 3n),\qquad
    \sum_{j=1}^{3n} s_j = nB ,
\end{equation*}
and asks whether $\{1,\dots,3n\}$ can be partitioned into $n$ disjoint triples $T_1,\dots,T_n$ with $\sum_{j\in T_k}s_j = B$ for each $k$; this problem is NP-complete in the strong sense \citep{garey1979computers}.}

\subsection{Problem $1 | \Mode(p_{ij}, q_{ij}) | C_{\max}$}

We show that when $\delta$ is unrestricted, minimizing $C_{\max}$ becomes strongly NP-hard, in sharp contrast to \Cref{thm:d2Cmax}'s weak hardness at $\delta = 2$.

\begin{theorem}\label{thm:Cmax-strong}
    The problem $1 | \Mode(p_{ij}, q_{ij}) | C_{\max}$ is strongly NP-hard when the parameter $\delta$ is unrestricted.
\end{theorem}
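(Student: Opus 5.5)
The plan is a pseudo-polynomial transformation from \textsc{3-Partition}. Each element becomes a job, and its mode selects the ``slot'' $k\in\{1,\dots,n\}$ (that is, the triple) it is assigned to, so $\delta=n$ modes per job are used. The sequence is then essentially forced by \Cref{lem:d1Cmax}: for any fixed mode assignment we may sort by non-increasing tail. If the tail of mode $k$ decreases in $k$, the jobs appear in blocks $1,2,\dots,n$. The makespan is then the maximum over $k$ of the completion time of the last job in block $k$, since all jobs in a block share the same tail.

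\textbf{Construction.} Let $N=n^2B+1$ and $Q=n^2B-Bn(n-1)/2$. For each $j\in\{1,\dots,3n\}$ and $k\in\{1,\dots,n\}$, mode $k$ of job $j$ is
\begin{align*}
p_{kj}=(N+k)s_j,\qquad q_{kj}=(n-k)NB .
\end{align*}
The threshold is $T=nNB+Q$. All numbers are polynomial in $n$ and $B$, as strong hardness requires. Write $L_k$ for the total size of the jobs given mode $k$, and $P_k=L_1+\dots+L_k$, so that $P_n=nB$. In the canonical order, block $k$ finishes at
\begin{align*}
C^{(k)}=NP_k+\sum_{i\le k} iL_i+(n-k)NB .
\end{align*}

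\textbf{Key steps (``if makespan $\le T$ then a 3-partition exists'').}
\begin{enumerate}
\item Since $\sum_{i\le k} iL_i\le n\cdot nB<N$, the condition $C^{(k)}\le T$ gives $N(P_k-kB)\le Q-\sum_{i\le k}iL_i<N$. By integrality this yields $P_k\le kB$ for every $k$.
\item For the last block, $C^{(n)}\le T$ gives $\sum_k kL_k\le Q$.
\item By Abel summation, $\sum_k kL_k=nP_n-\sum_{k<n}P_k=n^2B-\sum_{k<n}P_k$.
\item Combining step 3 with step 1 gives $\sum_k kL_k\ge n^2B-B\,n(n-1)/2=Q$.
\item Together with step 2, equality must hold throughout, so $P_k=kB$ for all $k$ and hence $L_k=B$ for all $k$.
\item Finally, $B/4<s_j<B/2$ forces each block to consist of exactly three elements, giving the desired triples.
\end{enumerate}

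\textbf{Converse.} Given a 3-partition, assign the triple $T_k$ to mode $k$. Then $P_k=kB$, and $C^{(k)}=nNB+\sum_{i\le k}iB\le nNB+\sum_{i\le n}iB$. We need this at most $T$, which requires checking $\sum_{i\le n}iB\le Q$. Here $\sum_{i\le n} iB=Bn(n+1)/2$, while $Q=Bn(n+1)/2$ as well, since $n^2-n(n-1)/2=n(n+1)/2$. So the bound holds with equality at $k=n$, and the makespan is at most $T$.

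\textbf{Main obstacle.} The delicate part is the two-sided forcing. The tails only impose \emph{upper} bounds $P_k\le kB$ on the prefix loads, which by themselves allow dumping all elements into late blocks. The mode-dependent processing inflation $ks_j$ is what charges late assignment, converting the prefix upper bounds into the exact equalities $P_k=kB$.

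To make this work, the scaling $N$ must dominate all second-order terms so that the integrality argument in step 1 goes through. I would also double-check two points: the reduction to the canonical block order via \Cref{lem:d1Cmax}, since a schedule meeting $T$ can be re-sorted without increasing $C_{\max}$; and that ties in tails within a block cause no issue.
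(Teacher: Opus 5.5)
Your reduction is correct. Its skeleton matches the paper's: \textsc{3-Partition} with $\delta=n$ modes labelling the triples, tails that depend only on (and strictly decrease in) the mode index so that \Cref{lem:d1Cmax} gives the block order $1,\dots,n$, one inequality per block, and a summation-by-parts argument that forces $X_k=B$.

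The difference is in the calibration, and hence in how the equality is forced.
\begin{itemize}
\item The paper takes $p_{kj}=ks_j$ and $q_k=\bigl(\sum_{\ell=k+1}^{n}\ell\bigr)B$. These tails are tuned so that each block constraint reads exactly $\sum_{i\le k} iX_i\le \frac{k(k+1)}{2}B$. The positive multipliers $c_k$, which amount to Abel summation of $\sum_i X_i=\sum_i x_i/i$, then collapse these into $nB\le nB$.
\item You use a two-scale encoding instead. The $N$-part of each block constraint, via integrality, gives the unweighted capacity bounds $P_k\le kB$. Only the last block's low-order part is used, giving $\sum_k kL_k\le Q$. Abel summation in the opposite direction then supplies $\sum_k kL_k\ge Q$.
\end{itemize}
Your version separates more transparently the two roles you identify: the tails cap prefix loads, and the inflation $ks_j$ charges late assignment. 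The paper's version needs neither a large multiplier nor integrality; its forcing argument already works for real-valued loads. It also keeps all numbers at $O(n^2B)$ rather than your $O(n^3B^2)$.

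Two small points. First, in Step~1 the bound you actually need is $Q=\frac{n(n+1)}{2}B\le n^2B<N$ together with $\sum_{i\le k}iL_i\ge 0$, not $\sum_{i\le k}iL_i<N$. Second, $C^{(k)}\le T$ is enforced by an actual job only when mode $k$ is used. For an empty block it is implied by the constraint of the last nonempty block $k'<k$, since $P_k=P_{k'}$, $\sum_{i\le k}iL_i=\sum_{i\le k'}iL_i$ and $(n-k)NB<(n-k')NB$. If no such $k'$ exists, it follows from $(n-k)NB<T$. So Steps~1--2 stand; the paper leaves the same point implicit.
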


\begin{proof}
    We give a polynomial-time reduction from \textsc{3-Partition}. We construct a scheduling instance with $3n$ jobs, one for each $s_j$, and set the number of modes to $\delta=n$. For job $j$ and mode $k$ ($1\le k\le n$) define
    \begin{equation}\label{eq:pkj}
        p_{kj} = k\cdot s_j ,\qquad q_{kj} = \Bigl(\sum_{\ell=k+1}^{n} \ell\Bigr) B = \frac{(k+1+n)(n-k)}{2}\cdot B .
    \end{equation}
    In particular $q_{nj}=0$ for every $j$, and the tail values satisfy $q_{1j}>q_{2j}>\cdots>q_{nj}=0$. Since $q_{kj}$ does not depend on $j$, we write $q_k = q_{kj}$ for each mode $k$. All numerical values are polynomially bounded in the size of the \textsc{3-Partition} instance, so the reduction runs in polynomial time.

    {Because the tail of a job depends only on its assigned mode and decreases strictly with $k$, \Cref{lem:d1Cmax} shows that, for any fixed mode assignment, resequencing the jobs in non-increasing order of tail does not increase the makespan.} Hence we may restrict attention to schedules that process all jobs of mode~$1$ first (in arbitrary order), then all jobs of mode~$2$, and so on up to mode~$n$. Let $M_k$ be the set of jobs assigned to mode $k$, and define
    \begin{equation*}
        X_k = \sum_{j\in M_k} s_j ,\qquad x_k = \sum_{j\in M_k} p_{kj} = k X_k .
    \end{equation*}
    Clearly $\sum_{k=1}^n X_k = \sum_{j=1}^{3n} s_j = nB$. For a schedule respecting the mode order, the completion time of the last job of mode $k$ is
    \begin{equation*}
        C_k = \sum_{i=1}^{k} x_i + q_k ,
    \end{equation*}
    and the makespan is $C_{\max}= \max_{1\le k\le n} C_k$. We set the target value
    \begin{equation*}
        M^* = \frac{n(n+1)}{2}B .
    \end{equation*}

    Suppose first that the \textsc{3-Partition} instance admits a solution, i.e., the $3n$ numbers can be partitioned into triples $T_1,\dots,T_n$ with sum $B$ each. Assign the three jobs corresponding to $T_k$ to mode $k$. Then $X_k=B$, $x_k=kB$, and
    \begin{align*}
        C_k = \sum_{i=1}^{k} iB + q_k = \frac{k(k+1)}{2}B + \frac{(k+1+n)(n-k)}{2}B = \frac{n(n+1)}{2}B = M^* .
    \end{align*}
    Thus the resulting schedule has makespan exactly $M^*$.

    Conversely, assume there exists a schedule with makespan at most $M^*$. By the structural observation we may reorder the jobs so that the mode order is respected without increasing the makespan; hence there is a schedule of the above form with $C_{\max}\le M^*$. For this schedule we have $C_k\le M^*$ for every $k$, i.e.,
    \begin{equation}\label{eq:bound}
        \sum_{i=1}^{k} x_i \le M^* - q_k = \frac{k(k+1)}{2}B \qquad (1\le k\le n).
    \end{equation}
    Define coefficients
    \begin{equation*}
        c_k =
        \begin{cases}
            \dfrac{1}{k(k+1)}, & 1\le k\le n-1, \\[8pt]
            \dfrac{1}{n},      & k=n.
        \end{cases}
    \end{equation*}
    Multiply the $k$-th inequality in \eqref{eq:bound} by $c_k$ and sum over $k=1,\dots,n$:
    \begin{equation*}
        \sum_{k=1}^{n} c_k \sum_{i=1}^{k} x_i
        \le \sum_{k=1}^{n-1} \frac{1}{k(k+1)}\cdot\frac{k(k+1)}{2}B
        + \frac{1}{n}\cdot\frac{n(n+1)}{2}B = \frac{n-1}{2}B + \frac{n+1}{2}B = nB.
    \end{equation*}
    The left-hand side can be rewritten by exchanging the summations:
    \begin{equation*}
        \sum_{k=1}^{n} c_k \sum_{i=1}^{k} x_i
        = \sum_{i=1}^{n} x_i\Bigl(\sum_{k=i}^{n} c_k\Bigr).
    \end{equation*}
    For $1\le i\le n-1$, $\sum_{k=i}^{n} c_k = \sum_{k=i}^{n-1}\frac{1}{k(k+1)} + \frac{1}{n} = \bigl(\frac{1}{i}-\frac{1}{n}\bigr) + \frac{1}{n} = \frac{1}{i}$, while for $i=n$ the sum equals $\frac{1}{n}$. Hence the left-hand side simplifies to $\sum_{i=1}^{n} \frac{x_i}{i} = \sum_{i=1}^{n} X_i = nB$.

    We have obtained $nB \le nB$, so {all inequalities in \eqref{eq:bound}, each weighted by a strictly positive coefficient $c_k$, must in fact be equalities}. Consequently $\sum_{i=1}^{k} x_i = \frac{k(k+1)}{2}B$ for every $k$, which yields $x_k = kB$ and therefore $X_k = B$ for all $k$.

    Each $X_k$ is a sum of distinct numbers $s_j$ strictly between $B/4$ and $B/2$. A sum equal to $B$ can be achieved only with exactly three such numbers. Thus each $M_k$ consists of precisely three jobs whose $s_j$ values sum to $B$, and the sets $M_1,\dots,M_n$ form a solution to the original \textsc{3-Partition} instance.

    We have shown that the \textsc{3-Partition} instance admits a solution if and only if the constructed scheduling instance possesses a schedule with makespan at most $M^*$. Because all numerical data are polynomially bounded, the reduction is polynomial. As \textsc{3-Partition} is strongly NP-complete, the problem $1|\Mode(p_{ij},q_{ij})|C_{\max}$ is strongly NP-hard.
\end{proof}

Just as \Cref{cor:d2f} lifted \Cref{thm:d2Cmax} from $C_{\max}$ to $D_{\max}$ at $\delta = 2$ by appending a single job with a large tail, the same device lifts \Cref{thm:Cmax-strong} to the unrestricted-mode setting.

\begin{corollary}\label{cor:f-strong}
    The problem $1|\Mode(p_{ij},q_{ij})|D_{\max}$ is strongly NP-hard.
\end{corollary}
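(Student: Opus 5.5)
We reuse the device of \Cref{cor:d2f} and apply it to the instances produced in the proof of \Cref{thm:Cmax-strong}. The reduction is from the strongly NP-hard problem $1|\Mode(p_{ij},q_{ij})|C_{\max}$ to $1|\Mode(p_{ij},q_{ij})|D_{\max}$.

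The construction is as follows. Take an instance $\mathcal{I}$ built from a \textsc{3-Partition} instance, with $3n$ jobs, $n$ modes per job, and parameters as in \eqref{eq:pkj}. Set $A=\sum_j\max_i p_{ij}$, $B'=\max_{i,j}q_{ij}$ and $M=2(A+B')+1$. Form $\mathcal{I}'$ by adding one single-mode job $J_0$ with $(p_0,q_0)=(M,0)$.

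The argument then has two cases.
\begin{itemize}
\item If $J_0$ is not first, the first job completes by time $A+B'$ and $J_0$ completes no earlier than $M$. Hence $D_{\max}\ge A+B'+1$.
\item If $J_0$ is first, it completes at exactly $M$. Every other job completes at $M$ plus its completion time in the induced schedule of $\mathcal{I}$, which is at least $0$ and at most $A+B'$. So $C_{\min}=M$ and $D_{\max}=C^{\mathcal{I}}_{\max}$.
\end{itemize}
Since every schedule of $\mathcal{I}$ has makespan at most $A+B'$, the optimum places $J_0$ first, and $\mathrm{OPT}_D(\mathcal{I}')=\mathrm{OPT}_C(\mathcal{I})$. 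Consequently $\mathcal{I}$ has a schedule with $C_{\max}\le M^*$ if and only if $\mathcal{I}'$ has one with $D_{\max}\le M^*$. The threshold $M^*$ is below $A+B'+1$, so the first case cannot sneak in a yes-answer.

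The step that needs care, and that separates this from the weak-sense corollary, is checking that the reduction preserves strong NP-hardness. The new numbers must stay polynomially bounded in the unary size of the \textsc{3-Partition} instance. In \eqref{eq:pkj} we have $p_{kj}\le nB$ and $q_{kj}\le n^2B$, so $A\le 3n^2B$ and $B'\le n^2B$. Then $M=O(n^2B)$ and $M^*=O(n^2B)$, all polynomial in $n$ and $B$. The mode count $\delta=n$ is also polynomial. Composing with the pseudo-polynomial reduction of \Cref{thm:Cmax-strong} therefore gives a pseudo-polynomial transformation from \textsc{3-Partition}, and strong NP-hardness follows.
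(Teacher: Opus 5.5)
Your proposal is correct and follows the paper's route: the paper justifies this corollary only by saying that the device of \Cref{cor:d2f} lifts \Cref{thm:Cmax-strong}, and you carry that out explicitly, including the check (left implicit in the paper) that $M=2(A+B')+1=O(n^2B)$ keeps all numbers polynomially bounded, so strong NP-hardness is preserved. Your auxiliary job $(M,0)$, with a large \emph{processing} time and zero tail, is the right choice: it matches the actual proof of \Cref{cor:d2f}, whereas the paper's lead-in phrase ``a single job with a large tail'' would not collapse $D_{\max}$ to $C_{\max}$.
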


\subsection{Problem $1|\Mode(p_{ij},q_{ij})|\sum w_j C_j$}

We next consider the objective function $\sum w_j C_j$\@. \cite{cao2006} already established NP-hardness of this objective at $\delta = 2$ (see \Cref{tbl:main-results}), so unlike the other three objectives no separate $\delta = 2$ result is needed here. Our contribution is to show the hardness is strong once $\delta$ is unrestricted. The reduction is again from \textsc{3-Partition} but is structurally more delicate.

\begin{theorem}\label{thm:wjcj-strong}
    The problem $1|\Mode(p_{ij},q_{ij})|\sum w_j C_j$ is strongly NP-hard.
\end{theorem}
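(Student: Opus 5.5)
We reduce from \textsc{3-Partition}, reusing the mode structure of the proof of \Cref{thm:Cmax-strong}, and choose the weights so that, for any mode assignment, the optimal sequence of \Cref{lem:d1wjcj} becomes a block order by mode. Create $3n$ jobs, one per $s_j$, with weight $w_j=s_j$ and $\delta=n$ modes
\begin{equation*}
    p_{kj}=k\,s_j,\qquad q_{kj}=q_k \quad (1\le k\le n),
\end{equation*}
where the tails $q_k$ depend only on $k$ and are specified below. In mode $k$ every job has ratio $w_j/p_{kj}=1/k$. By \Cref{lem:d1wjcj}, for any fixed mode assignment an optimal sequence therefore processes all mode-$1$ jobs first, then all mode-$2$ jobs, and so on up to mode $n$. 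Within a block all ratios are equal, so the order inside a block does not affect the objective.

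Let $M_k$ be the set of mode-$k$ jobs and $X_k=\sum_{j\in M_k}s_j$, so that $\sum_k X_k=nB$. For the block order, I will compute the objective explicitly. A job $j\in M_k$ starts after total processing $\sum_{i<k} iX_i$ plus the processing of the earlier jobs in its own block. Summing $w_jC_j$ over all jobs gives
\begin{equation*}
    \sum_j w_jC_j=\sum_{k}X_k\sum_{i<k} iX_i+\sum_k \frac{k}{2}\Bigl(X_k^2+\sum_{j\in M_k}s_j^2\Bigr)+\sum_k q_kX_k .
\end{equation*}
The $s_j^2$ terms total $\tfrac{k}{2}\sum_{j\in M_k}s_j^2$ per block, so they do \emph{not} cancel across blocks. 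To obtain a clean quadratic in $X$, I will rewrite $\tfrac{k}{2}\sum_{j\in M_k}s_j^2=\tfrac{1}{2}\sum_j \mu(j)s_j^2$ and absorb it into the linear term. This means giving job $j$ in mode $k$ a $j$-dependent tail $q_{kj}=q_k+c'-\tfrac{k}{2}s_j$ (times $2$ for integrality, scaling everything). Then $w_jq_{kj}$ contributes $-\tfrac{k}{2}s_j^2$, which cancels that term exactly. After this cancellation the objective is $Q(X)+\sum_k q_kX_k+\text{const}$, where
\begin{equation*}
    Q(X)=\tfrac12\sum_{i,k}\min(i,k)X_iX_k=\tfrac12\sum_{t=1}^n\Bigl(\sum_{k\ge t}X_k\Bigr)^2 .
\end{equation*}
This $Q$ is strictly convex. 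Its gradient at $X=B\mathbf{1}$ has $k$-th component $g_k=B\bigl(k(k+1)/2+k(n-k)\bigr)$. Set $q_k=g_n-g_k\ge 0$, which is a nonnegative integer. Then the gradient of the whole objective at $B\mathbf{1}$ is the constant vector $g_n\mathbf{1}$, which is orthogonal to the constraint hyperplane $\sum X_k=nB$.

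By strict convexity, $B\mathbf{1}$ is the unique minimizer of the objective over the hyperplane $\sum X_k=nB$. Take the threshold $W^*$ to be the objective value at $X=B\mathbf{1}$; it is an explicit polynomial in $n$, $B$ and $\sum s_j^2$. If a 3-partition exists, assigning triple $T_k$ to mode $k$ attains $W^*$. Conversely, suppose some schedule has objective at most $W^*$. Resequencing it into block order does not increase the objective (\Cref{lem:d1wjcj}), so its vector $X$ achieves value at most $W^*$, and uniqueness of the minimizer forces $X_k=B$ for all $k$. The bounds $B/4<s_j<B/2$ then force each $M_k$ to be a triple summing to $B$. All numbers are polynomial in $n$ and $B$, so strong NP-hardness follows.

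The main obstacle is the exact bookkeeping of the within-block term. I need to verify that the tail adjustment keeps every $q_{kj}$ nonnegative and integral; the constant $c'$ and a global factor $2$ handle this. I also need to check that the adjustment does not disturb the $w/p$ ordering, which it does not, since tails play no role in \Cref{lem:d1wjcj}. Finally, I need to confirm the sign of the strict-convexity argument, including at integer points near $B\mathbf{1}$. This holds because any $X\ne B\mathbf{1}$ on the hyperplane gives a value strictly above $W^*$.
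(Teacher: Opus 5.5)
Your proposal is correct and is essentially the paper's proof. After your factor-$2$ rescaling of time, your tails $2(g_n-g_k)+2c'-ks_j$ coincide with the paper's $q_{kj}=M-ks_j-Bk(2n+1-k)$ with $M=2g_n+2c'$ (your weights differ only by the factor $2$); the $-ks_j$ term cancels the within-block $s_j^2$ terms, and the quadratic form $\sum_{i,k}\min(i,k)X_iX_k=\sum_t\bigl(\sum_{k\ge t}X_k\bigr)^2$ is the same, so the only difference is that you certify $X=B\mathbf{1}$ as the unique minimizer via strict convexity and a gradient orthogonal to $\sum_k X_k=nB$, whereas the paper substitutes $y=X-B$ and completes the square to get $F=2Q(y)+T$.
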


\begin{proof}
    We reduce from \textsc{3-Partition}. We construct a scheduling instance with $3n$ jobs, one for each $s_j$, and set the number of modes to $\delta=n$. For job $j$ and mode $k$ ($1\le k\le n$) define
    \begin{align*}
        w_j = 2s_j,\qquad p_{kj} = 2k s_j,\qquad q_{kj} = M - k s_j - B k(2n+1-k),
    \end{align*}
    where $M$ is a large integer (e.g., $M = 3Bn^2$) that guarantees $q_{kj}\ge 0$ for all $j,k$. All numerical values are polynomially bounded, so the reduction is polynomial.

    For any job assigned to mode $k$ we have $p_{kj}/w_j = k$, independent of $j$. In a single‑machine problem without tails, the weighted shortest processing time (WSPT) rule minimizes $\sum w_jC_j$ by sequencing jobs in non‑decreasing order of $p_j/w_j$ (Lemma \ref{lem:d1wjcj}). Consequently, any optimal schedule must process all jobs that share the same mode as a contiguous block, the blocks themselves appearing in order of increasing $k=1,\dots,n$.  The internal order of jobs inside a block does not
    affect the objective.

    Let $M_k$ be the set of jobs assigned to mode $k$ and define
    \begin{align*}
        X_k = \sum_{j\in M_k} s_j .
    \end{align*}
    Since every job belongs to exactly one mode, $\sum_{k=1}^n X_k = nB$.

    Now we compute the total weighted completion time $F = \sum_j w_j C_j$ for an optimal schedule that respects the block structure described above. The cumulative processing time before block $k$ starts is
    \begin{align*}
        P_{k-1} = 2\sum_{i=1}^{k-1} i X_i .
    \end{align*}
    {Consider the jobs of block $k$. List them in any order as $j_1,\dots,j_m$ and write $v_\ell = w_{j_\ell} = 2s_{j_\ell}$ for $\ell=1,\dots,m$; then $\sum_{\ell=1}^m v_\ell = 2X_k$. The $\ell$-th job of the block has processing time $k v_\ell$ and tail $q_{k,j_\ell}$. Its completion time is therefore
    \begin{align*}
        C_{j_\ell} = P_{k-1} + \sum_{r=1}^{\ell-1} k v_r + k v_\ell + q_{k,j_\ell}.
    \end{align*}
    Multiplying by $v_\ell$ and summing over the block gives three natural components. The start-time contribution is
    \begin{align*}
        \sum_{\ell} v_\ell\Bigl(P_{k-1} + \sum_{r=1}^{\ell-1} k v_r\Bigr)
        & = P_{k-1}\sum_\ell v_\ell + \frac{k}{2}\Bigl({\bigl(\sum_\ell v_\ell\bigr)}^2 - \sum_\ell v_\ell^2\Bigr) \\ & = P_{k-1}W_k + \frac{k}{2}W_k^2 - \frac{k}{2}\sum_\ell v_\ell^2,
    \end{align*}
    where we used $W_k = \sum_\ell v_\ell = 2X_k$. The processing-time component is then simply $\sum_\ell v_\ell (k v_\ell) = k\sum_\ell v_\ell^2$. The tail component expands as
    \begin{align*}
        \sum_{\ell} v_\ell q_{k,j_\ell}
        & = \sum_{j\in M_k} 2s_j\bigl(M - k s_j - B k(2n+1-k)\bigr) \\ &= 2M X_k - 2k\sum_{j\in M_k} s_j^2 - 2B k(2n+1-k) X_k .
    \end{align*}
    Adding these three parts, the quadratic terms in the $s_j$ cancel out because $v_\ell = 2s_{j_\ell}$ implies $\sum_\ell v_\ell^2 = 4\sum_{j\in M_k} s_j^2$, whence
    \begin{align*}
        -\frac{k}{2}\sum_\ell v_\ell^2 + k\sum_\ell v_\ell^2 - 2k\sum_{j\in M_k} s_j^2 = \frac{k}{2}\Bigl(4\sum_{j\in M_k} s_j^2\Bigr) - 2k\sum_{j\in M_k} s_j^2 = 0 .
    \end{align*}}
    Thus the total contribution of block $k$ simplifies to
    \begin{align*}
        \sum_{j\in M_k} w_j C_j = P_{k-1}W_k + \frac{k}{2}W_k^2 + 2M X_k - 2B k(2n+1-k) X_k .
    \end{align*}
    Substituting $W_k = 2X_k$ and $P_{k-1} = 2\sum_{i<k} i X_i$,
    \begin{equation}\label{eq:block}
        \sum_{j\in M_k} w_j C_j = 4X_k\sum_{i<k} i X_i + 2k X_k^2 + 2M X_k - 2B k(2n+1-k) X_k .
    \end{equation}

    Summing \eqref{eq:block} over $k=1,\dots,n$ and using $\sum_k X_k = nB$ we obtain the overall objective
    \begin{equation}\label{eq:F}
        F = \sum_{k=1}^n \Bigl(4X_k\sum_{i<k} i X_i + 2k X_k^2\Bigr) + 2M nB - 2B\sum_{k=1}^n k(2n+1-k) X_k .
    \end{equation}

    The double sum in \eqref{eq:F} admits a compact representation in terms of the $\min(i,k)$ function:
    \begin{align*}
        \sum_{k=1}^n \Bigl(4X_k\sum_{i<k} i X_i + 2k X_k^2\Bigr) = 2\sum_{i=1}^n\sum_{k=1}^n \min(i,k) X_i X_k .
    \end{align*}
    For the linear term we use the identity $k(2n+1-k) = 2\sum_{i=1}^n \min(i,k)$, which is easily verified. Substituting these two identities into \eqref{eq:F} gives
    \begin{equation}\label{eq:F2}
        F = 2\sum_{i,k} \min(i,k) X_i X_k - 4B\sum_{i,k} \min(i,k) X_k + 2M nB .
    \end{equation}

    Now define $y_i = X_i - B$ for $i=1,\dots,n$. The constraint $\sum_i X_i = nB$ implies $\sum_i y_i = 0$. Replacing $X_i$ by $y_i+B$ in \eqref{eq:F2} and expanding,
    \begin{align*}
        F & = 2\sum_{i,k} \min(i,k) (y_i+B)(y_k+B) - 4B\sum_{i,k} \min(i,k) (y_k+B) + 2M nB \\ & = 2\sum_{i,k} \min(i,k) y_i y_k + 4B\sum_{i,k} \min(i,k) y_i + 2B^2\sum_{i,k} \min(i,k) \\ & \qquad - 4B\sum_{i,k} \min(i,k) y_k - 4B^2\sum_{i,k} \min(i,k) + 2M nB .
    \end{align*}
    Because the matrix $(\min(i,k))$ is symmetric, the two linear terms cancel. Collecting the constants we obtain
    \begin{align*}
        F = 2\,Q(y) + T,
    \end{align*}
    where
    \begin{align*}
        Q(y) = \sum_{i=1}^n\sum_{k=1}^n \min(i,k) y_i y_k,\qquad T = 2M nB - 2B^2\sum_{i=1}^n\sum_{k=1}^n \min(i,k).
    \end{align*}

    The quadratic form $Q(y)$ is {positive definite}. Indeed, by using the representation $\min(i,k) = \sum_{r=1}^n \mathbf{1}_{\{r\le i\}}
        \mathbf{1}_{\{r\le k\}}$ and exchanging the order of summation we get
    \begin{align*}
        Q(y) = \sum_{r=1}^n {\Bigl(\sum_{i=r}^n y_i\Bigr)}^2 \ge 0 .
    \end{align*}
    If $Q(y)=0$, then each partial sum $\sum_{i=r}^n y_i$ must be zero, which forces $y_r = 0$ for every $r$. Hence $F \ge T$, and the minimum value $T$ is achieved exactly when $X_k = B$ for all $k$.

    We can now complete the reduction.  If the \textsc{3-Partition} instance has a solution, assign the three jobs of each triple to a distinct mode; then $X_k = B$ for every $k$, and the block schedule attains $F = T$.  Conversely, suppose there exists a schedule with $F \le T$.  Since $T$ is a lower bound, we must have $F = T$, which forces $X_k = B$ for all $k$.  Each $X_k$ is a sum of distinct integers $s_j$ satisfying $\frac{B}{4} < s_j < \frac{B}{2}$.  A sum equal to $B$ can be formed only by exactly three such numbers: two numbers would sum to at most $B$, but the strict inequality $s_j < B/2$ makes the sum strictly less than $B$; four numbers would sum to at least $B$, with $s_j > B/4$ forcing a strict inequality $>B$.  Thus each $M_k$ consists of precisely three jobs whose $s_j$ values sum to $B$, and the sets $M_1,\dots,M_n$ form a valid \textsc{3-Partition}.

    We have polynomially reduced \textsc{3-Partition} to the decision version of our scheduling problem. Because \textsc{3-Partition} is strongly NP-complete, $1|\Mode(p_{ij},q_{ij})|\sum w_j C_j$ is strongly NP-hard.
\end{proof}

\subsection{Problem $1 | \Mode(p_{ij}, q_{ij})| C_{\min}$}

Finally, we consider maximizing $C_{\min}$, whose $\delta = 2$ hardness was established in
\Cref{thm:d2Cmin}. We show that, as with $C_{\max}$, lifting the restriction on $\delta$ {upgrades
this hardness to strong NP-hardness}.

\begin{theorem}\label{thm:Cmin-strong}
    The problem $1 | \Mode(p_{ij}, q_{ij})| C_{\min}$ is strongly NP-hard.
\end{theorem}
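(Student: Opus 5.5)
We reduce from \textsc{3-Partition}, imitating the mode structure of \Cref{thm:Cmax-strong} but making $p_{kj}+q_{kj}$ independent of $j$, so that \Cref{lem:d1Cmin} fixes the block order. Given $s_1,\dots,s_{3n}$ and $B$, we take $3n$ ordinary jobs with $\delta=n$ modes each, plus one auxiliary single-mode job $J_0$ with $(p,q)=(0,0)$. We set
\begin{equation*}
    L=\frac{n(n+1)}{2}B,\qquad c_k=L-\frac{k(k-1)}{2}B,\qquad p_{kj}=k s_j,\qquad q_{kj}=c_k-k s_j .
\end{equation*}
Since $s_j<B/2$, we get $q_{kj}> \frac{n(n+1)-k^2}{2}B>0$. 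All data are integers (after doubling if needed) and polynomially bounded in $n$ and $B$. The decision question is whether some schedule has $C_{\min}\ge L$.

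\emph{Structure.} For a fixed mode assignment, every job in mode $k$ has $p+q=c_k$, and $c_1>c_2>\dots>c_n>0$, while $J_0$ has $p+q=0$. By \Cref{lem:d1Cmin} we may therefore assume the schedule processes mode~1 jobs, then mode~2 jobs, and so on, with $J_0$ last. Write $X_k=\sum_{j\in M_k}s_j$, $x_k=kX_k$ and $S_k=\sum_{i\le k}x_i$.
\begin{itemize}
    \item Every job in block $k$ completes at exactly $S_{k-1}+c_k$, because its start time plus $p_{kj}+q_{kj}$ equals that value regardless of its position in the block.
    \item $J_0$ completes at $S_n$.
\end{itemize}
Hence $C_{\min}\ge L$ is equivalent to two conditions: $S_{k-1}\ge \frac{k(k-1)}{2}B$ for every \emph{nonempty} block $k$, and $S_n\ge L$.

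\emph{Yes direction.} Given a 3-partition, put triple $T_k$ in mode $k$. Then $X_k=B$ and $S_{k-1}=\frac{k(k-1)}{2}B$, so every block completes at exactly $L$, and $S_n=L$.

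\emph{No direction, and the main obstacle.} The inequalities are available only for nonempty blocks. I expect the delicate point to be recovering the full system
\begin{equation*}
    S_k\ge \frac{k(k+1)}{2}B\qquad (1\le k\le n),
\end{equation*}
which the averaging argument needs.
\begin{itemize}
    \item For $k=n$ this is the condition $S_n\ge L$.
    \item For $k<n$, use monotonicity. Let $m>k$ be the next nonempty block. Then $S_k=S_{m-1}\ge\frac{m(m-1)}{2}B\ge\frac{k(k+1)}{2}B$.
    \item If no later block is nonempty, then $S_k=S_n\ge L$, which again gives the bound.
\end{itemize}
Next, mirror the weighting of \Cref{thm:Cmax-strong} with the inequalities reversed. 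Abel summation gives
\begin{equation*}
    \sum_i \frac{x_i}{i}=\sum_{k=1}^{n-1}S_k\Bigl(\frac1k-\frac1{k+1}\Bigr)+\frac{S_n}{n}.
\end{equation*}
Inserting the lower bounds shows the right-hand side is at least $\frac{n-1}{2}B+\frac{n+1}{2}B=nB$. The left-hand side equals $\sum_i X_i=nB$. All weights are positive, so every inequality is tight, which forces $x_k=kB$ and hence $X_k=B$ for all $k$.

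Finally, the bounds $B/4<s_j<B/2$ force each $M_k$ to consist of exactly three elements summing to $B$, which is a valid 3-partition. Since all numbers are polynomially bounded, strong NP-hardness follows.
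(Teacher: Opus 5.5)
Your reduction is correct. It uses the same mode data as the paper ($p_{kj}=ks_j$, $q_{kj}=M_k-ks_j$ with $M_k=M^*-\frac{k(k-1)}{2}B$, your $c_k$), the same appeal to \Cref{lem:d1Cmin} to force mode-blocks, a final zero job to supply $S_n\ge M^*$, and the same Abel summation. The real difference is how you handle empty blocks.

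\begin{itemize}
\item The paper adds single-mode anchor jobs $A_k$ with $(p,q)=(0,M_k)$. This makes every block nonempty, so each inequality $\sum_{i\le m} iX_i\ge\frac{m(m+1)}{2}B$ can be read off the schedule directly.
\item You drop the anchors. Instead you observe that the constraint missing at an empty block is dominated by the constraint of the next nonempty block, or by $S_n\ge L$. This works because $S_k$ is constant across empty blocks and $\frac{m(m-1)}{2}$ is increasing in $m$.
\end{itemize}

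Your route gives a leaner instance ($3n+1$ jobs instead of $4n+1$) and shows the anchors are redundant. The cost is a short extra case argument, which the paper avoids by construction.

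You should fix one misstatement. The jobs of block $k$ do not all complete at $S_{k-1}+c_k$. The $\ell$-th job of the block starts only after its predecessors in the block. It therefore completes at $S_{k-1}+c_k$ plus their processing times, which are strictly positive since $s_j>0$. Only the first job of the block attains $S_{k-1}+c_k$. What your argument actually needs, and what does hold, is that the minimum completion time within block $k$ equals $S_{k-1}+c_k$. Likewise, in the yes direction it is the first job of each block that completes at exactly $L$.

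Finally, no doubling is needed: $n(n+1)$ and $k(k-1)$ are both even, so all the data are already integers.
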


\begin{proof}
    We give a polynomial-time reduction from \textsc{3-Partition}. Set $M^* = \frac{n(n+1)}{2}B$ and, for $k=1,\dots,n$, define
    \begin{align*}
        M_k = M^* - \frac{k(k-1)}{2}B .
    \end{align*}
    Observe that $M_1 = M^*$, $M_n = nB$, and $M_k - M_{k+1} = kB > 0$, so the $M_k$ are strictly decreasing. We now construct the scheduling instance.
    \begin{itemize}
        \item \text{Ordinary jobs} $J_1,\dots,J_{3n}$: each $J_j$ has $n$ modes.  In mode $k$ ($1\le k\le n$),
              \begin{align*}
                  p_{kj} = k s_j,\qquad q_{kj} = M_k - k s_j .
              \end{align*}
              Since $s_j < B/2$ and one checks that $M_k \ge k B/2$, all $q_{kj}$ are non-negative.
        \item \text{Anchor jobs} $A_1,\dots,A_n$: $A_k$ has a single mode with $p_{A_k}=0$ and $q_{A_k}=M_k$.
        \item \text{Final job} $F$: a single mode with $p_F = q_F = 0$.
    \end{itemize}
    The instance contains $4n+1$ jobs, each ordinary job has $n$ modes, and every numerical value is bounded by $O(n^2B)$; hence the reduction is polynomial.

    {For any fixed mode assignment, \Cref{lem:d1Cmin} shows that there exists an optimal sequence ordering the jobs by non-increasing values of $p_j+q_j$.} For our instance,
    \begin{align*}
        p_{A_k}+q_{A_k} = M_k,\qquad p_{kj}+q_{kj} = M_k,\qquad p_F+q_F = 0 .
    \end{align*}
    Since $M_1 > M_2 > \cdots > M_n > 0$, {we may restrict attention to schedules that} group all jobs with $p_j+q_j = M_1$ first, then those with $M_2$, and so on, finishing with $F$. Consequently the schedule splits into $n$ contiguous blocks. Block $k$ consists of the anchor $A_k$ and all ordinary jobs assigned to mode $k$. {Within block $k$, a job that starts when the cumulative processing time equals $P$ completes at $P + M_k$; hence the minimum completion time within the block is attained by its first job and equals $S_k + M_k$, where $S_k$ is the start time of the block. The role of the anchor jobs is precisely to guarantee that every block $k$ is nonempty and therefore contributes this constraint, even if no ordinary job is assigned to mode $k$.}

    Let $\mathcal{J}_k$ be the set of ordinary jobs assigned to mode $k$ and define
    \begin{align*}
        X_k = \sum_{j \in \mathcal{J}_k} s_j .
    \end{align*}
    Since every ordinary job receives exactly one mode,
    \begin{equation}\label{eq:Cmin-sumX}
        \sum_{k=1}^{n} X_k = nB .
    \end{equation}
    The total processing time of block $k$ is $x_k = \sum_{j\in\mathcal{J}_k} p_{kj} = k X_k$, so the block start times are
    \begin{align*}
        S_1 = 0,\qquad S_k = \sum_{i=1}^{k-1} i X_i \ \ (k=2,\dots,n),\qquad S_{n+1} = \sum_{i=1}^{n} i X_i .
    \end{align*}
    The minimum completion time of block $k$ is
    \begin{align*}
        C_1^{\min} = M^*,\qquad C_k^{\min} = S_k + M_k = \sum_{i=1}^{k-1} i X_i + M^* - \frac{k(k-1)}{2}B \ \ (k\ge 2).
    \end{align*}
    The final job $F$ completes at $C_F = S_{n+1} = \sum_{i=1}^{n} i X_i$. Thus the overall minimum is
    \begin{align*}
        \min_j C_j = \min\bigl\{M^*, C_2^{\min},\dots, C_n^{\min}, C_F\bigr\}.
    \end{align*}

    The scheduling instance asks whether there exists a schedule with $C_{\min} \ge M^*$, i.e., with $\min_j C_j \ge M^*$. This is equivalent to requiring $C_k^{\min} \ge M^*$ for every $k\ge 2$ and $C_F \ge M^*$. Writing these inequalities in terms of the $X_i$ gives
    \begin{align*}
        \sum_{i=1}^{m} i X_i \ge \frac{m(m+1)}{2}B \qquad\text{for } m=1,\dots,n .
    \end{align*}
    Now introduce the deviations $Z_i = X_i - B$. From \eqref{eq:Cmin-sumX} we have $\sum_{i=1}^{n} Z_i = 0$, and the inequalities become
    \begin{align*}
        \zeta_m := \sum_{i=1}^{m} i Z_i \ge 0 \qquad (m=1,\dots,n).
    \end{align*}
    Set $\zeta_0 = 0$, so $i Z_i = \zeta_i - \zeta_{i-1}$.

    We apply summation by parts (the Abel transformation) to the zero sum $\sum_{i=1}^n Z_i$:
    \begin{align*}
        0 = \sum_{i=1}^{n} Z_i = \sum_{i=1}^{n} \frac{i Z_i}{i} = \sum_{i=1}^{n} \frac{\zeta_i - \zeta_{i-1}}{i} = \frac{\zeta_n}{n} + \sum_{i=1}^{n-1} \zeta_i\!\left(\frac{1}{i} - \frac{1}{i+1}\right).
    \end{align*}
    All coefficients $\frac{1}{i}-\frac{1}{i+1}$ are strictly positive, and the $\zeta_i$ are non-negative by construction. Therefore every term on the right-hand side must be zero, which forces $\zeta_i = 0$ for all $i$. Consequently $i Z_i = \zeta_i - \zeta_{i-1} = 0$, so $Z_i = 0$ and
    \begin{align*}
        X_i = B \qquad\text{for every } i=1,\dots,n.
    \end{align*}

    Each $X_i$ is a sum of distinct integers $s_j$ satisfying $B/4 < s_j < B/2$.  A sum equal to $B$ can only be formed by exactly three such numbers: two would give a sum $< B$, and four would give $> B$. Hence every $\mathcal{J}_i$ contains precisely three ordinary jobs whose $s_j$ values sum to $B$, which constitutes a valid \textsc{3-Partition} solution.

    Conversely, if the \textsc{3-Partition} instance has a solution with triples $T_1,\dots,T_n$ each summing to $B$, assign the three jobs of $T_k$ to mode $k$. Then $X_k = B$ for all $k$, and a direct substitution shows $C_k^{\min} = M^*$ for every $k$ and $C_F = M^*$. Thus $\min_j C_j = M^*$, and the constructed schedule attains the target.

    We have reduced \textsc{3-Partition} polynomially to the decision version of our scheduling problem. As \textsc{3-Partition} is strongly NP-complete, the problem $1|\Mode(p_{ij},q_{ij})|C_{\min}$ is strongly NP-hard.
\end{proof}

\section{Conclusion}
\label{sec:conclusions}

This paper has introduced and analyzed a scheduling model in which each job is executed in one of
several operational modes, with each mode jointly determining the job's machine processing time and
its subsequent resource-free tail time. The model is directly motivated by the production of
programmable materials such as blue-phase liquid crystal displays, where product quality matures
during the tail phase, and by prefabricated concrete production, where the curing time depends on
the chosen mix design.

Our findings establish a comprehensive complexity landscape for this problem class. We have demonstrated that while traditional single-mode scheduling is often straightforward, the introduction of multiple operational modes significantly complicates the search for optimal schedules. Specifically, we have proved that minimizing the makespan, the range of completion times, and the total weighted completion time becomes NP-hard with as few as two modes and strongly NP-hard when the number of modes is unrestricted. On the other hand, we identified that most objectives remain polynomially solvable if the processing sequence is fixed, providing a viable path for practical applications where the production order is predetermined.

The results have immediate relevance for systems like blue-phase liquid crystal printing, where synchronizing completion times is essential for product uniformity. Our analysis of the range objective $D_{\max}$ shows that achieving perfect synchronization is computationally difficult even in a fixed-order environment, although it admits a pseudo-polynomial time algorithm that can be utilized for practical instances. Manufacturers can use these insights to balance the trade-off between higher-resolution printing modes, which entail longer tails, and overall system throughput.

\medskip
Several promising avenues for future study remain:
\begin{itemize}
    \item Approximation algorithms: Given the NP-hardness of most variants, developing
        near-optimal heuristics or algorithms with guaranteed performance bounds would be of high
        practical value.
    \item Online scheduling: In many manufacturing environments, jobs arrive over time rather
        than being known in advance. Extending this model to an online setting where mode
        selection must be made in real-time is a natural next step.
    \item Multi-machine environments: This study focused on a single-machine bottleneck. Future
        research could explore parallel-machine or flow-shop configurations to better reflect
        complex industrial assembly lines.
\end{itemize}

By providing both the theoretical limits and algorithmic possibilities of multi-mode scheduling with tails, this work offers a foundation for more efficient and synchronized advanced manufacturing processes.

\section*{Acknowledgments}

Xiandong Zhang acknowledges the financial support from the National Natural Science Foundation of China under Grant No. 71971065 and No. 72232002.

\bibliography{Ref_Operational_Printing}

\end{document}